\pdfoutput=1
\documentclass[journal]{IEEEtran}
\usepackage[utf8]{inputenc}
\usepackage{amsmath}
\usepackage{amsfonts}
\usepackage{bbding}
\usepackage{amssymb}
\usepackage{array}
\usepackage{multirow}
\usepackage{graphicx}
\usepackage[caption=false,font=footnotesize]{subfig}
\usepackage{algorithm}
\usepackage{algorithmic}
\usepackage[hidelinks]{hyperref}
\newcommand\algorithmicprocedure{\textbf{procedure}}
\newcommand{\algorithmicendprocedure}{\algorithmicend\ \algorithmicprocedure}
\makeatletter
\newcommand\PROCEDURE[3][default]{%
  \ALC@it
  \algorithmicprocedure\ \textsc{#2}(#3)%
  \ALC@com{#1}%
  \begin{ALC@prc}%
}
\newcommand\ENDPROCEDURE{%
  \end{ALC@prc}%
  \ifthenelse{\boolean{ALC@noend}}{}{%
    \ALC@it\algorithmicendprocedure
  }%
}
\newenvironment{ALC@prc}{\begin{ALC@g}}{\end{ALC@g}}
\makeatother

\usepackage{xcolor}
\usepackage{cite}

\allowdisplaybreaks

\DeclareMathOperator{\expt}{\mathbb{E}}

\usepackage{amsthm}
\usepackage{stfloats}

\newtheorem{theorem}{Theorem}
\newtheorem{lemma}{Lemma}

\newtheorem{assumption}{Assumption}

\newtheorem{cor}{Corollary}

\DeclareMathOperator*{\argmin}{\arg\!\min}
\DeclareMathOperator*{\argmax}{\arg\!\max}

\begin{document}
\title{\fontsize{22}{28}\selectfont 
Duality-Guided Graph Learning for Real-Time Joint Connectivity and Routing in LEO Mega-Constellations
}

\author{
\IEEEauthorblockN{Zhouyou Gu, $^\dagger$Jinho Choi, Tony Q. S. Quek, Jihong Park}
\thanks{
Z. Gu, T. Q. S. Quek, and J. Park  are with the Information Systems Technology and Design Pillar, Singapore University of Technology and Design, Singapore 487372 (email: \{zhouyou\_gu, tonyquek, jihong\_park\}@sutd.edu.sg).
}
\thanks{
$^\dagger$J. Choi is with the School of Electrical and Mechanical Engineering,
the University of Adelaide, Adelaide, SA 5005, Australia
(email: jinho.choi@adelaide.edu.au).
}
\thanks{Corresponding author is J. Park. Source codes will be available at {https://github.com/zhouyou-gu}.}
}

\maketitle

\begin{abstract}
Laser inter-satellite links (LISLs) of low Earth orbit (LEO) mega-constellations enable high-capacity backbone connectivity in non-terrestrial networks, but their management is challenged by limited laser communication terminals, mechanical pointing constraints, and rapidly time-varying network topologies. This paper studies the joint problem of LISL connection establishment, traffic routing, and flow-rate allocation under heterogeneous global traffic demand and gateway availability. We formulate the problem as a mixed-integer optimization over large-scale, time-varying constellation graphs and develop a Lagrangian dual decomposition that interprets per-link dual variables as congestion prices coordinating connectivity and routing decisions. 
To overcome the prohibitive latency of iterative dual updates, we propose DeepLaDu, a Lagrangian duality-guided deep learning framework that trains a graph neural network (GNN) to directly infer per-link (edge-level) congestion prices from the constellation state in a single forward pass.
We enable scalable and stable training using a subgradient-based edge-level loss in DeepLaDu. 
We analyze the convergence and computational complexity of the proposed approach and evaluate it using realistic Starlink-like constellations with optical and traffic constraints. Simulation results show that DeepLaDu achieves up to 20\% higher network throughput than non-joint or heuristic baselines, while matching the performance of iterative dual optimization with orders-of-magnitude lower computation time, suitable for real-time operation in dynamic LEO networks.
\end{abstract}
\begin{IEEEkeywords} 
Non-terrestrial networks; mega-constellations; graph learning.
\end{IEEEkeywords}

\section{Introduction}
\IEEEPARstart
Non-terrestrial networks (NTNs), comprising satellite, aerial, and spaceborne platforms, are becoming a core enabler of global Internet access in 5G-and-beyond systems by providing low-latency, wide-area connectivity that complements terrestrial infrastructure \cite{lin20215g,azari2022evolution}.
Among them, Low Earth orbit (LEO) mega-constellations gain particular prominence due to their dense spatial reuse, low propagation latency, and ability to form a high-capacity, dynamically reconfigurable backbone for non-terrestrial networks \cite{maiolinicapez2024use,he2024directtosmartphone}.
Large commercial deployments (e.g., Starlink- and OneWeb-like systems) increasingly operate as space backbone networks \cite{wu2025enhancing}.
To sustain high-capacity inter-satellite transport without spectrum congestion in NTN backbones, these constellations rely on laser inter-satellite links (LISLs) built on free-space optical (FSO) technologies \cite{elamassie2023freea}, forming a large-scale, time-varying graph whose nodes are satellites (or LCTs) and whose edges are LISLs.

\begin{figure}[!t]
  \centering
  \includegraphics[scale=0.86]{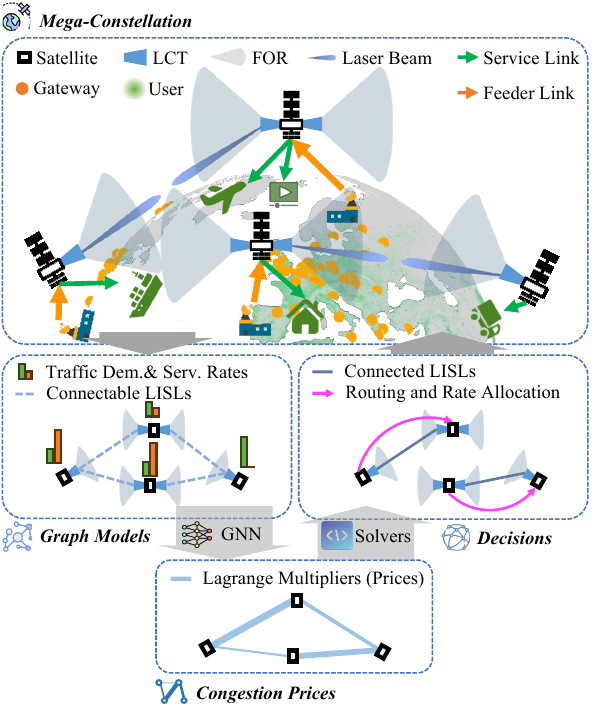}
  \vspace{-0.5cm}
  \caption{DeepLaDu for joint connectivity and routing in a mega-constellation.}
  \label{fig:leo_overview}
  \vspace{-0.3cm}
\end{figure}

A central operational challenge is to route traffic and allocate flow rates over the LISL network to meet highly uneven global demand and gateway availability \cite{bhattacherjee2019network}.
User populations and gateway locations are geographically imbalanced, so the resulting traffic demands and serving capacities vary significantly across satellites in both space and time; ignoring this heterogeneity leads to underutilized links in low-demand regions and bottlenecks near high-demand or gateway-rich areas.
Importantly, routing and rate allocation must be solved \emph{after} LISL connections are established, since feasible paths and per-link capacities depend on the realized LISL topology.

Establishing LISL connectivity is itself constrained by scarce laser communication terminals (LCTs) and their mechanics.
Unlike RF links, LISLs use highly directional beams \cite{ntontin2025vision}, so each LCT can sustain only one active link at a time, limiting the number of concurrent LISLs per satellite \cite{wang2024free}.
Moreover, beam steering is restricted by mechanical field-of-regard (FOR) and vibration/jitter effects \cite{kaymak2018survey}, further reducing feasible pairings.
These constraints are exacerbated by the time-varying LEO constellations, which dynamically change link feasibility and traffic hotspots under stringent computation time limits for reconfiguring connections and routes.

Deep learning can be a natural approach to meeting these latency constraints by learning a direct mapping from constellation state to connection, routing, and rate allocation decisions.
Recent learning-based methods have been explored for satellite-network routing and resource allocation \cite{gao2025topologycompressed,gao2025age,li2025efficient,lozano-cuadra2025continual,ran2025fullydistributed,zhang2025grlr,huang2024gnnenabled,wu2025sate}. However, large constellation size and continual topology variation make end-to-end learning of discrete connection and routing decisions difficult to scale. This motivates graph neural networks (GNNs), which naturally process variable-size, graph-structured inputs and generalize across changing connectivity patterns \cite{bengio2021machine,kool2018attentiona,nazari2018reinforcement}. Yet even with GNNs, directly learning the full joint decisions from the large decision space remains challenging as the constellation and traffic scale, due to the combinatorial connection-and-routing space coupled with continuous rate allocation.

A further opportunity is to leverage the structure of the optimization problem to decompose the decision space.
In our prior work \cite{gu2025joint}, we developed a Lagrangian dual formulation that decomposes the joint LISL connection, routing, and rate allocation problem on the constellation graph.
By relaxing per-link capacity constraints on connectable LISL edges, the dual assigns an edge-wise variable to each satellite pair, which can be interpreted as a \emph{congestion price} on that graph link.
These prices guide which satellite links should be prioritized in connection establishment and routing, such that higher prices indicate tighter edge capacity constraints and higher routing costs, thereby prioritizing link establishment while discouraging routing over the links.
Given the connection and routing decisions, the remaining flow-rate allocation reduces to a straightforward linear program \cite{gu2025joint,wu2025sate}.
Despite this favorable decomposition, updating the edge-wise congestion prices remains a bottleneck: beyond subgradient descent, there is no efficient price-update solver with comparable generality.
Subgradient methods require many iterations to approach optimal prices, which is incompatible with time-varying LEO operation under tight coherent time during which the constellation graph remains unchanged \cite{wu2025sate}.
Therefore, the key challenge is: \emph{how can one compute near-optimal edge-wise congestion prices efficiently on large, time-varying constellation graphs?}

To address this challenge, we propose a Lagrangian duality-guided deep learning framework (DeepLaDu) that uses a GNN to directly infer edge-wise congestion prices from the constellation graph, as illustrated in Fig. \ref{fig:leo_overview}.
The GNN takes the constellation graph as input and outputs congestion prices for all connectable LISL edges in a single forward pass, eliminating the need for iterative dual updates. 
Using the predicted prices, the LISL connections, traffic routing, and flow-rate allocation are computed using subproblems solvers maximum weight matching, shortest-path routing, and linear programming, respectively.
This approach is well-suited to our setting because the constellation is naturally represented as a graph with satellites as nodes and connectable LISLs as edges; the inferred congestion prices can be immediately used within our decomposition to make connection, routing, and rate-allocation decisions.

\textbf{Contributions}.\quad Our main contributions are as follows.
\begin{itemize}
  \item We formulate the joint learning task to optimize the LISL connection, traffic routing, and rate allocations by training the GNN that predicts edge-wise Lagrange multipliers (congestion prices) via one-step GNN inference. 
  The one-step inference removes the iterative dual updates required by classical Lagrangian methods \cite{fisher2004lagrangian,gu2025joint}. DeepLaDu achieves low processing time (tens of milliseconds) suitable for real-time operation in large, time-varying constellations.

  \item 
  We train the GNN using a loss function built from the dual subgradient signal, yielding direct edge-wise feedback on congestion prices.
  This differs from prior works that rely on graph-level metrics \cite{gao2025topologycompressed,gao2025age,li2025efficient,zhang2025grlr,huang2024gnnenabled} (e.g., averaged network throughput, delay or age of information) or node-level metrics \cite{lozano-cuadra2025continual,ran2025fullydistributed,wu2025sate} (per-satellite performance), which provide coarse feedback on link-level decisions and fail to converge efficiently in large constellations.

  \item We provide theoretical analysis of convergence and computational complexity for the proposed DeepLaDu framework. Under mild assumptions, we prove that DeepLaDu converges to a stationary point of the GNN parameters. The complexity analysis shows that DeepLaDu scales polynomially with the constellation size due to the low complexity solvers of the decomposed subproblems and one-step GNN inference, which is critical for real-time operation in large, time-varying mega-constellations.
  
  \item Through simulations on Starlink constellations with realistic LCT mechanics and global traffic/gateway profiles, we show that DeepLaDu improves up to 20\% network throughput over heuristic and non-joint baselines.
  We also measure the coherent time of the constellation graphs, i.e., the time duration over which the graph structure remains unchanged, showing that DeepLaDu's processing time is well within this limit. Ablation studies further demonstrate the impact of LCT and constellation configurations on the proposed method's performance, where our method constantly outperforms the baselines in the network throughput by up to $50\%$ in constellations with extreme scarce LCT resources, i.e., where each satellite has a small number of LCTs. 
  \end{itemize}

\textbf{Paper Organization}.\quad The rest of this paper is organized as follows.
Section \ref{sec:related_works} reviews the related works. Section \ref{sec:system_model} presents the system model of the LEO mega-constellation with LISLs and the graph representation of the constellation. Section \ref{sec:problem_formulation} formulates the joint optimization problem over the graphs and discusses its challenges.
Next, Sections \ref{sec:lagrangian_dual_relaxation} and \ref{sec:learning_based_dual_optimization} propose the formulation of DeepLaDu framework and its GNN/loss design, respectively.
Finally, Section \ref{sec:simulation_results} shows the simulations evaluating the proposed methods, and Section \ref{sec:conclusion} concludes this work.

\section{Related Work}\label{sec:related_works}
In the literature on LISL connection design, existing approaches typically link each satellite to immediate neighbors in the same orbit and nearby satellites in adjacent orbits \cite{chaudhry2021laser,chen2021analysis}, which forms a grid connection pattern.
Inter-orbit links can also be adjusted to alternative neighbors to enhance connectivity \cite{chen2020topology,rao2025minimumhop,guo2024constellation}. Beyond the grid pattern, optimizing motif-based patterns, small and repeatable graph structures, can improve robustness and efficiency \cite{bhattacherjee2019network}. More adaptive schemes often use maximum-weight matching (MWM), ranking potential LISLs by link metrics \cite{leyva-mayorga2021interplane,ron2025time}, with weights based on a single factor (e.g., peak rate) \cite{leyva-mayorga2021interplane} or composite scores (e.g., latency and capacity) \cite{ron2025time}.
However, the above works neglect non-uniform traffic profile from uneven user and gateway distributions. This can lead to underutilized LCTs and LISLs and suboptimal network
throughput \cite{bhattacherjee2019network}. Incorporating traffic flow routing into the LCT connection design is essential to improve the overall throughput of the constellation.
Meanwhile, existing works perform routing after deciding the topology connection using either max-flow \cite{tao2023transmitting} and Dijkstra \cite{huang2024efficient}, or distributed next-hop selection \cite{ekici2001distributed}. 
The joint design of LISL connectivity and traffic routing requires further investigation to improve LCT utilization by connecting satellites with high traffic demand.

Learning-based methods have recently been widely studied for ISL-connected constellation graphs \cite{gao2025topologycompressed,gao2025age,li2025efficient,lozano-cuadra2025continual,ran2025fullydistributed,zhang2025grlr,huang2024gnnenabled,wu2025sate}.
For instance, works in \cite{gao2025topologycompressed,gao2025age} use GNN to predict age-of-information (AoI) to send data from the source satellite node to the destination satellite node and select minimum AoI paths to route the traffic in the constellation graph \cite{gao2025topologycompressed}.
Meanwhile, distributed learning approaches train NNs operating on each satellite to dynamically select next-hop routes at each satellite node of the graph\cite{li2025efficient,lozano-cuadra2025continual,ran2025fullydistributed,zhang2025grlr}.
Given routing paths over the constellation graphs, other works \cite{huang2024gnnenabled,wu2025sate} use GNNs to allocate link rates between paths to maximize network throughput.
How to train GNNs to infer the decision from the large space on the joint connection and routing problem remains unexplored.
As mentioned before, learning the congestion prices on the links of the constellation effectively constructs a weighted graph representation of the constellation. While learning to construct graph representations of networks has been successfully applied, for example, to indicate active transmission links \cite{zhao2025generative} and interference relationships \cite{gu2024graph}, the graph learning task in mega-constellations requires further study to address the challenges of large amount of satellites and time-varying constellation geometries.

\section{System Model and Graph Representation of LEO Mega-Constellation}\label{sec:system_model}
We consider a LEO mega-constellation connected by LISLs, where each satellite is equipped with multiple LCTs. Note that while we instantiate the model using LEO satellites, the formulation applies to NTN backbone nodes with directional inter-node links and time-varying geometry.
\begin{figure}[!t]
  \centering
  \includegraphics[scale=0.75]{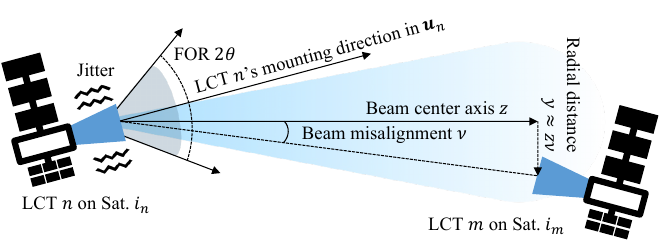}
  \vspace{-0.2cm}
  \caption{Illustration of the LCT and beam model from one LCT to another.}
  \label{fig:beam_model}
  \vspace{-0.3cm}
\end{figure}
\subsection{System Model}
\subsubsection{Orbital Dynamics}
The system time is denoted by $t$ (s), with $t=0$ aligned to a real-world time $\mathcal{T}_0$.
We adopt a Cartesian Earth-centered inertial (ECI) frame whose origin is at Earth’s center; the x-axis points toward the vernal equinox and the z-axis coincides with Earth’s spin axis \cite{vallado2022fundamentals}.
Earth rotates about the z-axis at $7.2921 \times 10^{-5}$ rad/s, and its radius is $6.3781 \times 10^6$ m.
The constellation consists of $I$ satellites indexed by $\mathcal{I}=\{1,\ldots,I\}$, each on a near-circular orbit described by two-line elements (TLEs).
Propagating the TLEs yields the ECI position $\mathbf{l}_i(t)$ (m) and velocity $\mathbf{v}_i(t)$ (m/s) of satellite $i\in\mathcal{I}$.
For any pair $(i,j)$, let $z_{i,j}(t)$ denotes the inter-satellite range and $\mathbf{d}_{i,j}(t)$ denotes the unit vector from $i$ to $j$, which are defined as
\begin{equation}
  \begin{aligned}
    z_{i,j}(t) = \|\mathbf{l}_i(t) - \mathbf{l}_j(t)\|,\ 
    \mathbf{d}_{i,j}(t) = \frac{\mathbf{l}_j(t) - \mathbf{l}_i(t)}{z_{i,j}(t)}.
  \end{aligned}
\end{equation}

\subsubsection{Satellite Form Factor, LCT Mechanics, and Beam}
Each satellite is modeled as a rigid body whose attitude is regulated by its onboard control system \cite{vallado2022fundamentals}, keeping the body frame aligned with $-\mathbf{l}_i(t)$, i.e., toward Earth’s center. Satellite $i$ carries $N'$ laser inter-satellite communication terminals (LCTs), giving $N=N'I$ terminals across the constellation indexed by $\mathcal{N}=\{1,\dots,N\}$. Terminal $n$ resides on satellite $i_n\in\{1,\dots,I\}$ and has a mounting direction $\mathbf{u}_n(t)$ (unit vector). Each LCT steers within a cone of half-angle $\theta$ (radians) about $\mathbf{u}_n(t)$ using, e.g., steering mirrors. Here, this region of the LCT can point to is referred to as its field of regard (FOR). Due to space dynamics and micro-disturbances, the realistic pointing process exhibits small random deviations (pointing jitter) from the intended direction, as illustrated in Fig. \ref{fig:beam_model}.
The pointing jitter is modeled as a Rayleigh-distributed angular error \cite{arnon2003effects} in radians with a standard deviation $\sigma_{\mathrm{J}}$ (radians).
Note that pointing errors are assumed as independent and identically distributed (i.i.d.) across LCTs.
We model each LISL transmitter as producing a Gaussian beam \cite{saleh2019fundamentalsa} with total optical power $P_0$ in watts (W). The detailed beam model in the appendix presents the LISL capacity between two LCTs $n$ and $m$ as $r_{n,m}(t)$ (Gbps) if connected at time $t$.

\subsubsection{Global Traffic Profile Model}
As this study focuses on the LCT management, we adopt a simplified traffic setting in which gateway-sourced content is delivered to users within the satellite NTN backbone's coverage over the globe.
Users are modeled as data downloaders (e.g., video streaming) over service links. 
For satellite $i$ at time $t$, let $U_i(t)$ be the random number of active users in its coverage, each requesting $D$ Gbps. Multiple ground gateways are distributed globally; if a gateway lies within a satellite’s coverage, that satellite can serve up to $Q$ Gbps of traffic. 
With gateway access, a satellite first serves its local demand $U_i(t)D$ and, via LISLs, can relay any residual capacity to other satellites. Accordingly, after serving local demand, satellite $i$ can serve other satellites' demand up to
\begin{equation}
  \begin{aligned}
    Q_i(t) = \begin{cases}
      (Q - U_i(t) D)_{+},\ &\text{if satellite $i$ has gateway access},\\
      0,\ &\text{otherwise},
    \end{cases}    
  \end{aligned}
\end{equation}
which we term the traffic serving rate of satellite $i$ to the constellation. Here $(x)_{+}=\max\{x,0\}$.
Likewise, the traffic demand at satellite $i$ is the residual user load within its coverage that is not satisfied by its own gateway (if available) and is defined as
\begin{equation}
  \begin{aligned}
    D_i(t) = 
    \begin{cases}
      (U_i(t) D-Q)_{+},&\text{if sat. $i$ has gateway access},\\
      U_i (t)D,&\text{otherwise},
    \end{cases}    
  \end{aligned}
\end{equation}
referred to as the traffic demand rate of satellite $i$.
In this setting, any positive residual demand at a satellite is served via LISLs by satellites that have gateway access.
Satellites with a positive serving/demand rate are termed serving/demanding satellites, respectively.

In the unconstrained case, any satellite with gateway access could serve a demanding satellite $s'$, creating an excessive number of source--destination pairs and inflating complexity. To limit the source--destination pairing set, we restrict $s'$ to be served via LISLs by its $M$ nearest gateway-capable satellites.
Let $\argmin_{\mathcal{S}\subseteq\mathcal{I},|\mathcal{S}|=M} \sum_{s\in\mathcal{S}} z_{s,s'}$ denote these $M$ closest candidates to $s'$, where $z_{s,s'}$ is the inter-satellite distance. 
We then define the set of all source--destination pairs as
\begin{equation}\label{eq:const:traffic:flow_pairs_init}
  \begin{aligned}
    \mathcal{F} = \{(s,s')\big|D_{s'} > 0; s \in \argmin_{\mathcal{S}\subseteq\mathcal{I},|\mathcal{S}|=M} \sum_{s\in\mathcal{S}} z_{s,s'} \}.
  \end{aligned}
\end{equation}

\begin{figure}[!t]
  \centering
  \includegraphics[scale=0.875]{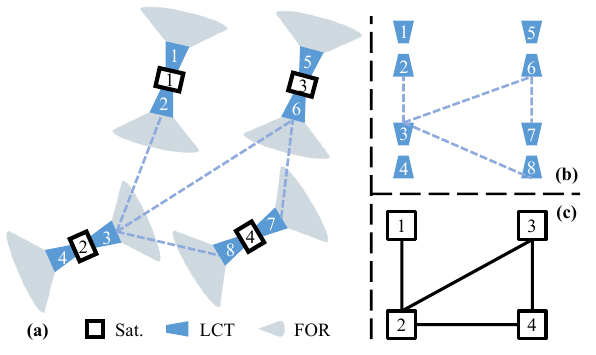}
  \caption{A constellation graph at a time instance $t$: (a) a network of four satellites, each equipped with two LCTs; (b) its LCT connectivity graph $\mathcal{G}^{\text{LCT}}(t)$; and (c) the satellite adjacency graph $\mathcal{G}^{\text{SAT}}(t)$.}
  \label{fig:graph_model}
  \vspace{-0.3cm}
\end{figure}
\subsection{Constellation Graphs}
\subsubsection{Graph Construction}
Given the constellation at a time instance $t$, we construct two separate constellation graphs to represent 1) the LCT connectivity for LISL formation and 2) satellite adjacency for traffic routing.
Specifically, we construct the LCT connectivity graph to represent all connectable LCT pairs that can potentially form a LISL.
We represent LCTs by the graph $\mathcal{G}^{\text{LCT}}(t)=(\mathcal{N},\mathcal{E}(t))$, whose nodes, $\mathcal{N}$, are LCTs and whose edges, $\mathcal{E}(t)$, connect LCT pairs that can potentially form a link at time $t$.
Two LCTs can potentially establish an LISL only when 1) the separation between their mounting satellites is below the maximum optical range $\hat{z}$ and 2) each satellite lies within the other terminal’s FOR. Accordingly, all connectable LCT pairs is expressed as
\begin{equation}
  \begin{aligned}
    &\mathcal{E}(t) = \big\{\{n,m\}\ \big| \ i_n \neq i_m, z_{i_n,i_m}(t) \leq \hat{z}, \\
    &\quad \mathbf{d}_{i_n,i_m}(t) \cdot \mathbf{u}_n(t) > \cos \theta,\mathbf{d}_{i_m,i_n}(t) \cdot \mathbf{u}_m(t) > \cos \theta \big\}.
  \end{aligned}
\end{equation}
Then, we construct the satellite adjacency graph to represent neighboring satellite pairs that can potentially connect via at least one pair of their LCTs, where the traffic can be routed. The set of all neighboring satellite pairs $(i,j)$ is denoted as
\begin{equation}
  \begin{aligned}
    \mathcal{L}(t) = \{(i,j)\ \big|\  \mathcal{E}_{i,j}(t)\neq \emptyset \},
  \end{aligned}
\end{equation}
where $\mathcal{E}_{i,j}(t)$ collects connectable LCTs between $i$ and $j$ as
\begin{equation}
  \begin{aligned}
    \mathcal{E}_{i,j}(t) = \big\{\{n,m\}\big| i_n = i, i_m = j,\{n,m\}\in\mathcal{E}(t)\big\}.
  \end{aligned}
\end{equation}
We denote $\mathcal{G}^\text{SAT}(t)=(\mathcal{I},\mathcal{L}(t))$ as the satellite adjacency graph, where nodes represent satellites and an edge $(i,j)$ exists iff at least one LCT pair between $i$ and $j$ is connectable.
Fig. \ref{fig:graph_model} illustrates the two constellation graphs, LCT connectivity graph $\mathcal{G}^{\text{LCT}}(t)$ and satellite adjacency graph $\mathcal{G}^{\text{SAT}}(t)$, when four satellites each carry two LCTs.

\begin{figure}[!t]
  \centering
  \includegraphics[scale=0.8]{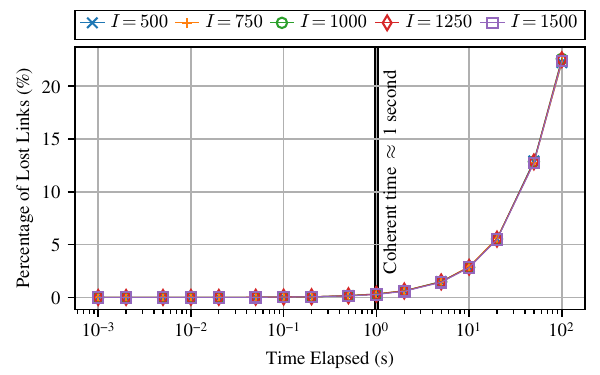}
  \caption{The percentage of lost connectable LCT pairs over time in a Starlink-like constellation with different numbers of satellites where each satellite carries $N'=2$ LCTs and has a FOR of $\theta=60$ deg (detailed constellation parameters are provided in Section \ref{sec:simulation_results}).}\label{fig:plot_test_ld_starlink_constellation_time_varying_coherent_time}
  \vspace{-0.3cm}
\end{figure}

\begin{figure*}
  \centering
  \includegraphics[scale=0.9]{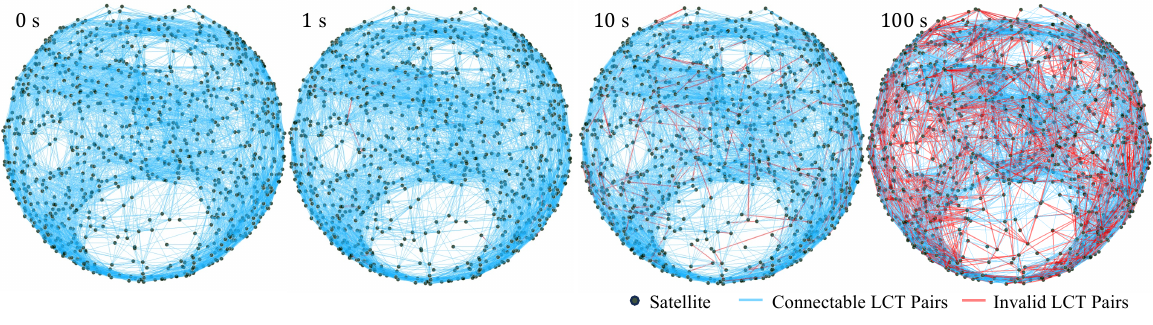}
  \vspace{-0.3cm}
  \caption{The constellation graph $\mathcal{G}^\text{LCT}$ indicating the connectable LCT pairs at the starting ($0$ s) and different ending time ($1$ s, $10$ s, $100$ s) when $I=1000$. Blue lines indicate the connectable LCT pairs, while red lines indicate the LCT pairs that are no longer connectable due to the satellite movement (detailed constellation parameters are provided in Section \ref{sec:simulation_results}).
    } \label{fig:constellation_simulation_1000_compare_time_varying}
  \vspace{-0.3cm}
\end{figure*}

\subsubsection{Time-Varying Nature of Graphs}
The constellation is constantly changing due to orbital dynamics, i.e., the relative motions between satellites.
Consequently, the connectable links between LCT pairs vary over time. For instance, Fig. \ref{fig:plot_test_ld_starlink_constellation_time_varying_coherent_time} shows the number of connectable LCT pairs over time in a Starlink-like constellation with different numbers of satellites.
We measure the coherent time of the constellation graph by counting the percentage of connectable LCT pairs that are lost compared to a random initial time instance.
As shown in Fig. \ref{fig:plot_test_ld_starlink_constellation_time_varying_coherent_time}, most of the connectable LCT pairs remain in the both LCTs' FOR within $1$ second.
This indicates that the constellation graph topology remains mostly unchanged within $1$ second.
Here, the duration of the constellation graph topology remaining mostly unchanged is referred to as the coherent time of the constellation.

Additionally, we show the constellation with $I=1000$ satellites in Fig. \ref{fig:constellation_simulation_1000_compare_time_varying} and the geographical differences in the satellites and connectable LCT pairs at the starting time and at a given time later. 
The figure shows that within $1$ second of time elapsed, the constellation structure does not change much, while after $10$ seconds and $100$ seconds, the constellation structure changes significantly due to the satellite movement. This implies that the constellation can be assumed quasi-static within a short time interval, e.g., on the order of $1$ second, and the optimization decisions should be made within this duration to adapt to the time-varying constellation.

More generally in NTNs, such time variation may arise from mobility, platform reconfiguration, or link-state dynamics, all of which manifest as changes in the underlying graph topology.
Next, we will formulate the joint optimization problem of LISL connections and traffic routing over the constellation graphs at an instantaneous time instance, and provide the solutions in the following sections. Since the problem and the solutions are discussed at each time instance, we omit the time index $t$ hereafter for brevity.

\section{Joint Connectivity and Routing Problem}\label{sec:problem_formulation}
This section formulates the joint optimization problem of LISL connections and traffic routing over the constellation graphs. Note that this formulation is directly adopted from \cite{gu2025joint}.
We first define three sets of decision variables: 1) LCT pair connection indicators over the LCT connectivity graph $\mathcal{G}^{\text{LCT}}$, 2) flow routing path selection variables over the satellite adjacency graph $\mathcal{G}^{\text{SAT}}$, and 3) flow rate allocation variables among the paths.
Then, we formulate feasibility sets for these three decision variables based on the constellation graphs and impose per-link capacity constraints that couple the decision variables.
Finally, we formulate the joint optimization problem in \eqref{eq:prob:constrainted_routing_and_matching:primal} and discuss the challenges in solving it.

\subsection{Decision Variables}
\subsubsection{LCT Pair Connection Indicators}
We denote whether LCTs $n$ and $m$ are connected via a LISL as a binary indicator $c_{n,m}\in\{0,1\}$, where $c_{n,m}=1$ indicates that LCTs $n$ and $m$ are connected.
The collection of all indicators $c_{n,m}$ is denoted as a vector $\mathbf{c} = \{c_{n,m}\}_{(n,m)\in\mathcal{E}}$.

\subsubsection{Traffic Routing Path Selection Variables}
For each flow $(s,s')\in \mathcal{F}$, we denote whether the directed link $(i,j)$ in the satellite adjacency graph $\mathcal{G}^{\text{SAT}}$ is used on the path from $s$ to $s'$ as a binary variable $x^{s,s'}_{i,j}\in\{0,1\}$, where $x^{s,s'}_{i,j}=1$ indicates that link $(i,j)$ is used.
We denote routing decisions from satellite $s$ to $s'$ as $\mathbf{x}^{s,s'} = \{x^{s,s'}_{i,j}\}_{(i,j)\in\mathcal{L}}$, and all routing decisions as $\mathbf{x} = \{\mathbf{x}^{s,s'}\}_{(s,s')\in\mathcal{F}}$.

\subsubsection{Traffic Rate Allocation Variables}
The data rate of each flow $(s,s')\in \mathcal{F}$ is denoted as $q^{s,s'}$ (Gbps), representing the rate at which satellite $s$ serves the demand of satellite $s'$ over the established LISL connections and routed paths.
We write all traffic flows as a vector $\mathbf{q} = \{q^{s,s'}\}_{(s,s')\in\mathcal{F}}$.

\subsection{Feasible Sets of Decision Variables}
We now formulate the feasible sets of the three decision variables based on the constellation graphs.
\subsubsection{LCT Connection Constraints}
Note that a LISL connection is reciprocal between the connected two LCTs, i.e.,
\begin{equation}\label{eq:const:link:binary_bidirectional_connection}
  \begin{aligned}
    c_{n,m} = c_{m,n} ,\ \forall n\neq m.
  \end{aligned}
\end{equation}
Only one LCT can be connected to each LCT at a time as
\begin{equation}\label{eq:const:link:one_connection}
  \begin{aligned}
    \sum_{m\in\mathcal{N}} c_{n,m} = 1,\ \forall n\in\mathcal{N}.
  \end{aligned}
\end{equation}
The feasible domain of LCT connections is defined as
\begin{equation}
  \begin{aligned}
    \mathcal{C} = \left\{ \mathbf{c} \ \big|\eqref{eq:const:link:binary_bidirectional_connection},\eqref{eq:const:link:one_connection}; c_{n,m}\in\{0,1\},\forall (n,m)\in\mathcal{E} \right\}.
  \end{aligned}
\end{equation}

\subsubsection{Traffic Routing Constraints}
Traffic is forwarded only across adjacent satellites.
For each source--destination pair $(s,s')$, let $x^{s,s'}_{i,j}\in\{0,1\}$ indicate whether the directed link $(i,j)$ is used on the path from $s$ to $s'$.
These routing variables obey standard flow-conservation constraints over $\mathcal{G}^\text{SAT}$ as
\begin{equation}\label{eq:const:path:flow_constraints}
\begin{aligned}
  \sum_{j:(s,j)\in \mathcal{L}} x^{s,s'}_{s,j} - \sum_{j:(j,s)\in \mathcal{L}} x^{s,s'}_{j,s} & = 1, \quad \text{(source flow)}
  \\
  \sum_{j:(s',j)\in \mathcal{L}} x^{s,s'}_{s',j} - \sum_{j:(j,s')\in \mathcal{L}} x^{s,s'}_{j,s'} & = -1, \quad \text{(destination flow)} 
  \\
  \sum_{j:(i,j)\in \mathcal{L}} x^{s,s'}_{i,j} - \sum_{j:(j,i)\in \mathcal{L}} x^{s,s'}_{j,i} & = 0, \quad \forall i \in \mathcal{I} \setminus \{s,s'\}.
\end{aligned}
\end{equation}
Let $\mathbf{x}^{s,s'} = \{x^{s,s'}_{i,j}\}_{(i,j)\in\mathcal{L}}$ denote the routing decisions from satellite $s$ to $s'$, and its feasible domain is defined as
\begin{equation}\label{eq:const:path:feasible_paths}
  \begin{aligned}
    \mathcal{X}^{s,s'} = \{\mathbf{x}^{s,s'} \ \big |  \eqref{eq:const:path:flow_constraints};\ x^{s,s'}_{i,j}\in\{0,1\},\forall (i,j)\in\mathcal{L}\}.
  \end{aligned}
\end{equation}
In addition, we write routing decisions of all flows as $\mathbf{x} = \{\mathbf{x}^{s,s'}\}_{(s,s')\in\mathcal{F}}$ that has a feasible domain denoted as
\begin{equation}
  \begin{aligned}
    \mathcal{X} = \{\mathbf{x} \ \big | \ \mathbf{x}^{s,s'} \in \mathcal{X}^{s,s'},\forall (s,s')\in\mathcal{F}\}.
  \end{aligned}
\end{equation}

\subsubsection{Traffic Serving and Demand Rates Constraints}
For each source--destination pair $(s,s')$, the flow $q^{s,s'}$ is limited by $s$’s serving capacity and $s'$’s residual demand, i.e.,
\begin{equation}\label{eq:const:traffic:flow_rate}
  \begin{aligned}
    \sum_{s':(s,s')\in\mathcal{F}}  q^{s,s'} \leq Q_s,\forall s\in\mathcal{I};\sum_{s:(s,s')\in\mathcal{F}}  q^{s,s'} \leq D_{s'}, \forall s'\in\mathcal{I}.
  \end{aligned}
\end{equation}
We define the feasible domain of traffic flows as
\begin{equation}
  \begin{aligned}
    \mathcal{Q} = \{\mathbf{q} \ \big |\  \eqref{eq:const:traffic:flow_rate};\ q^{s,s'} \geq 0,\ \forall (s,s')\in\mathcal{F}\}.
  \end{aligned}
\end{equation}

\subsection{Joint Link Matching and Flow Routing Problem}
For each satellite pair $(i,j)$, the total rate of all flows routed from $i$ to $j$ must not exceed the combined capacity of the LISLs established between them, i.e.,
\begin{equation}\label{eq:const:path:flow_rate}
  \begin{aligned}
    \sum_{(s,s')\in\mathcal{F}} q^{s,s'} x^{s,s'}_{i,j} \leq \sum_{\{n,m\}\in\mathcal{E}_{i,j}} r_{n,m} c_{n,m},\ \forall (i,j)\in\mathcal{L},
  \end{aligned} 
\end{equation}
which couples the three decision variables $\mathbf{c}$, $\mathbf{x}$, and $\mathbf{q}$.
Given the above feasible domain definitions and the link-capacity constraint \eqref{eq:const:path:flow_rate}, we formulate the joint link matching and flow routing task as maximizing the total network throughput, $\sum_{(s,s')\in\mathcal{F}} q^{s,s'}$, as
\begin{equation}\label{eq:prob:constrainted_routing_and_matching:primal}
  \begin{aligned}
  \min_{
    \substack{
      \mathbf{c}\in\mathcal{C};\mathbf{x}\in\mathcal{X};\mathbf{q}\in\mathcal{Q}
    }
  }
  -\sum_{(s,s')\in\mathcal{F}} q^{s,s'}, \ \text{s.t.}\ \eqref{eq:const:path:flow_rate}.
  \end{aligned}
\tag{\bf{P1}}
\end{equation}
Here, $\mathcal{C}$, $\mathcal{X}$, and $\mathcal{Q}$ denote the feasibility sets for LCT matching, routing decisions and flow allocations, respectively. For convenience, we express the objective in minimization form by negating the network throughput, i.e., maximizing $\sum_{(s,s')} q^{s,s'}$ is equivalent to minimizing $-\sum_{(s,s')} q^{s,s'}$. The resulting formulation is a mixed-integer program (MIP) and is NP-hard.
Thus, solving \eqref{eq:prob:constrainted_routing_and_matching:primal} optimally is computationally prohibitive for large constellations. Moreover, the problem must be solved repeatedly as the constellation evolves over time.
To address these challenges, we propose a learning-based Lagrangian dual optimization framework, namely DeepLaDu, to efficiently solve \eqref{eq:prob:constrainted_routing_and_matching:primal} in real-time, as detailed next.


\section{Learning Lagrange Multipliers on Constellation Graph as Congestion Prices}\label{sec:lagrangian_dual_relaxation}
This section presents the Lagrangian dual relaxation of the joint optimization problem in \eqref{eq:prob:constrainted_routing_and_matching:primal} that interprets the Lagrange multipliers as congestion prices over the constellation graphs to coordinate the three decision blocks, i.e., LCT matching, routing, and flow rate allocation.
The relaxation and interpretation on Lagrange multipliers are adopted from \cite{gu2024graph}.
Built upon this Lagrangian dual relaxation, we formulate a learning task in which a GNN is trained to predict the multipliers from the states of the constellation graphs.

\subsection{Lagrangian Dual Relaxation}\label{subsec:lagrangian_dual_relaxation}
Examining \eqref{eq:prob:constrainted_routing_and_matching:primal}, only the per-link capacity constraints \eqref{eq:const:path:flow_rate} couple the three decision blocks, LCT matching, flow-rate allocation, and routing, whereas the remaining requirements live independently in the feasibility sets $\mathcal{C}$, $\mathcal{X}$ and $\mathcal{Q}$.
Introducing Lagrange multipliers for \eqref{eq:const:path:flow_rate} moves these coupling constraints into the objective, which separates the three blocks while letting them interact through the multipliers.
Specifically, the multipliers penalize those links with violations in their maximum link rate constraints, which indicates that the link should be prioritized to be connected, but should be avoided in traffic flows' routing paths.
By updating the multipliers, one can coordinate matching, rate allocation, and routing jointly, yet solve them as independent subproblems.

\begin{figure*}[t]
\begin{align}
      g(\lambda)&= \min_{
        \substack{
          \mathbf{c}\in\mathcal{C};\mathbf{x}\in\mathcal{X};\mathbf{q}\in\mathcal{Q}
        }
      }
      L(\mathbf{c}, \mathbf{x}, \mathbf{q}, \lambda) = \min_{
        \substack{
          \mathbf{c}\in\mathcal{C};\mathbf{x}\in\mathcal{X};\mathbf{q}\in\mathcal{Q}
        }
      } -\sum_{(s,s')\in\mathcal{F} } q^{s,s'} + \sum_{(i,j)}\lambda_{i,j} (\sum_{(s,s')\in \mathcal{F}} q^{s,s'} x^{s,s'}_{i,j} - \sum_{\{n,m\}\in\mathcal{E}_{i,j}} r_{n,m} c_{n,m}) \notag
      \\
      &= \min_{
        \substack{
           \mathbf{x}\in \mathcal{X};\mathbf{q}\in\mathcal{Q}
        }
      }
      \Big\{
      \sum_{(s,s')\in\mathcal{F} } q^{s,s'} (-1 + \sum_{(i,j)\in\mathcal{L}}\lambda_{i,j} x^{s,s'}_{i,j})   
      \Big\} 
      +\min_{
        \substack{
          \mathbf{c}\in\mathcal{C}
        }
      }
      \Big\{
      -\sum_{\{n,m\}\in\mathcal{E}} (\lambda_{i_n,i_m} + \lambda_{i_m,i_n})  r_{n,m} c_{n,m}
      \Big\} \notag
      \\
      &= 
      -\underbrace{\max_{
        \substack{
          \mathbf{q}\in \mathcal{Q}
        }
      }
      \Big\{
      \sum_{(s,s')\in\mathcal{F} } q^{s,s'} (1 - 
      \overbrace{\min_{
        \substack{
          \mathbf{x}^{s,s'}\in\mathcal{X}^{s,s'}
        }
      }
      \sum_{(i,j)\in\mathcal{L}}\lambda_{i,j} x^{s,s'}_{i,j}}^{
      \substack{\text{(b) Minimum cost routing}}
      })
      \Big\}}_{\substack{\text{(c) Linear weighted flow rate maximization}}}
      -
      \overbrace{\max_{
        \substack{
          \mathbf{c}\in\mathcal{C}
        }
      }
      \Big\{
      \sum_{\{n,m\}\in\mathcal{E}} (\lambda_{i_n,i_m} + \lambda_{i_m,i_n})  r_{n,m} c_{n,m}
      \Big\}
      }^{
      \substack{\text{(a) Maximum weight LISL matching}}
      }. \label{eq:prob:constrainted_routing_and_matching:lagrangian_dual}
  \end{align}
  \hrule
\vspace{-0.5cm}
\end{figure*}

Specifically, by relaxing the maximum link rate constraints \eqref{eq:const:path:flow_rate}, the Lagrangian-augmented objective of \eqref{eq:prob:constrainted_routing_and_matching:primal} is
\begin{equation}\label{eq:prob:constrainted_routing_and_matching:lagrangian}
  \begin{aligned}
      &L(\mathbf{c}, \mathbf{x}, \mathbf{q}, \lambda) = -\sum_{(s,s')\in\mathcal{F} } q^{s,s'} \\
      &\qquad + \sum_{(i,j)}\lambda_{i,j} \Big(\sum_{(s,s')\in \mathcal{F}} q^{s,s'} x^{s,s'}_{i,j} - \sum_{\{n,m\}\in\mathcal{E}_{i,j}} r_{n,m} c_{n,m} \Big) ,
  \end{aligned}
\end{equation}
where $\lambda=\{\lambda_{i,j}\}_{(i,j)\in\mathcal{L}}$ is the set of Lagrange multipliers (or dual variables) for each link rate constraint \eqref{eq:const:path:flow_rate} in all neighboring satellite pairs $(i,j)\in\mathcal{L}$, with $\lambda_{i,j}\geq 0$.
The above Lagrangian-augmented objective relaxes \eqref{eq:prob:constrainted_routing_and_matching:primal} as
\begin{equation}\label{eq:prob:constrainted_routing_and_matching:relaxed}
  \begin{aligned}
  \{ \hat{\mathbf{c}}(\lambda) , \hat{\mathbf{x}}(\lambda), \hat{\mathbf{q}}(\lambda) \}= \argmin_{
      \substack{
        \mathbf{c}\in\mathcal{C};\mathbf{x}\in\mathcal{X};\mathbf{q}\in\mathcal{Q}
      }
    }
    L(\mathbf{c}, \mathbf{x}, \mathbf{q},  \lambda),
  \end{aligned}
  \tag{\textbf{P2}}
\end{equation}
where $\hat{\mathbf{c}}(\lambda)$, $\hat{\mathbf{q}}(\lambda)$ and $\hat{\mathbf{x}}(\lambda)$ denotes the optimal decisions given $\lambda$ minimizing the Lagrangian-augmented objective.
Given $\lambda$, the optimal value of the Lagrangian-augmented objective $L(\mathbf{c}, \mathbf{x}, \mathbf{q}, \lambda)$ in \eqref{eq:prob:constrainted_routing_and_matching:relaxed} is a function of $\lambda$, referred to as the dual function $g(\lambda)$.
The dual problem is to maximize the dual function $g(\lambda)$ over the Lagrange multipliers $\lambda$ as
\begin{equation}\label{eq:prob:constrainted_routing_and_matching:dual}
  \begin{aligned}
    \max_{\lambda\geq 0} g(\lambda) = \max_{\lambda\geq 0} \min_{
      \substack{
        \mathbf{c}\in\mathcal{C};\mathbf{x}\in\mathcal{X};\mathbf{q}\in\mathcal{Q}
      }
    }
    L(\mathbf{c}, \mathbf{x}, \mathbf{q}, \lambda).
  \end{aligned}
\tag{\textbf{P3}}
\end{equation}
As the dual function $g(\lambda)$ is always concave, the dual problem \eqref{eq:prob:constrainted_routing_and_matching:dual} is a convex optimization problem, and its maximum value provides a lower bound to the optimal value of the original problem \eqref{eq:prob:constrainted_routing_and_matching:primal} \cite{fisher2004lagrangian,boyd2004convex}.

\subsection{Converting Lagrange Multipliers to Matching/Routing}\label{subsec:dual_variable_guided_matching_and_routing}
The Lagrange multipliers from the dual problem \eqref{eq:prob:constrainted_routing_and_matching:dual} penalize violations in the link rate constraints \eqref{eq:const:path:flow_rate}, i.e., when traffic is overloaded on a link beyond its established LISL capacity, the multiplier for that link increases to discourage further routing over it, and the connection matching prioritizes connecting that link to establish more capacity.
This behavior allows us to interpret the optimized multipliers as congestion prices over the constellation graphs, guiding the matching and routing decisions.
Given Lagrange multipliers $\lambda$, each $\lambda_{i,j}$ value as a congestion price on the link-rate constraint in \eqref{eq:const:path:flow_rate} for the neighboring pair $(i,j)\in\mathcal{L}$. A larger $\lambda_{i,j}$ indicates a tighter (harder-to-satisfy) link rate constraint and heavier load on $(i,j)$. 
Accordingly, LISL matching should prioritize connecting the LCTs of $i$ and $j$ to expose capacity, whereas routing should de-emphasize traversing $(i,j)$ to avoid further congestion. Leveraging this interpretation, a feasible primal solution can be constructed from the optimized multipliers by: 1) forming max weight matching (MWM) of LCT pairs using $\lambda$ as weights, 2) computing minimum-cost routes for all source–destination satellites over the connected links weighted with $\lambda$, and 3) allocating flow rates based on the selected routes and the connected LISLs to maximize network throughput.
From an NTN perspective, the congestion prices act as global coordination signals for scarce backbone resources under the link capacity constraints.

Specifically,  the greedy weight matching is first used to compute the MWM of the graph $\mathcal{G}^{\text{LCT}}(\mathcal{N},\mathcal{E})$ with the given Lagrange multipliers $\lambda$ as
\begin{equation}\label{eq:rounding:mwm}
  \begin{aligned}
    \tilde{\mathbf{c}} \leftarrow \mathrm{MWM}(\mathcal{N}, \mathcal{E}, \{\lambda_{i_n,i_m} \cdot r_{n,m}\}_{(n,m)\in \mathcal{E}}).
  \end{aligned}
\end{equation}
Collect the connected satellite pairs in the above as 
\begin{equation}
  \begin{aligned}
    \mathcal{L}' =\{(i,j)| \exists (n,m) \in \mathcal{E}_{i,j}, c_{n,m} = 1 \}.
  \end{aligned}
\end{equation}
Then, we compute the routing paths based on the connected constellation graph $\mathcal{G}'^{\text{SAT}}=(\mathcal{I}, \mathcal{L}')$ using Dijkstra's algorithm, where Lagrange multipliers $\lambda_{i,j}$ weight the edges as
\begin{equation}\label{eq:rounding:spf}
  \begin{aligned}
    \tilde{\mathbf{x}}^{s,s'} \leftarrow \mathrm{SPF}(s,s',\mathcal{I}, \mathcal{L}', \{\lambda_{i,j}\}_{(i,j)\in \mathcal{L}'}), \ \forall (s,s')\in\mathcal{F}.
  \end{aligned}
\end{equation}
Note that in the connected constellation, not all source--destination satellite pairs $(s,s')\in\mathcal{F}$ have a routing path, i.e., $\tilde{\mathbf{x}}^{s,s'}$ may be infeasible in \eqref{eq:rounding:spf} for some $(s,s')\in\mathcal{F}$.
We remove those infeasible source--destination satellite pairs and define the set of connected source--destination pairs as
\begin{equation}
  \begin{aligned}
    \mathcal{F}' = \{(s,s')\in\mathcal{F} \,|\, \tilde{\mathbf{x}}^{s,s'}\ \text{is feasible in}\ \text{\eqref{eq:rounding:spf}}\},
  \end{aligned}
\end{equation}
reducing the number of routing paths to compute the traffic flow rates.
Finally, the flow rates will be allocated based on the connected constellation and routing paths by solving the flow rate maximization (FRM) problem as
\begin{equation}\label{eq:rounding:frm}
  \begin{aligned}
    \tilde{\mathbf{q}} \leftarrow \argmax_{
        \substack{
          \mathbf{q}\in \mathcal{Q}
        }
      }
      \sum_{(s,s')\in\mathcal{F}'}  q^{s,s'}, \ \text{s.t.}\ \eqref{eq:const:path:flow_rate}\ \text{given}\ \hat{\mathbf{c}},\ \tilde{\mathbf{x}}^{s,s'}.
  \end{aligned}
\end{equation}
Here, the constraints in \eqref{eq:const:path:flow_rate} are linear constraints on $\mathbf{q}$ given the previously computed $\hat{\mathbf{c}}$ and $\tilde{\mathbf{x}}^{s,s'}$, and thus the problem \eqref{eq:rounding:frm} is a linear program that can be solved efficiently.

With given Lagrange multipliers $\lambda$, the above three steps \eqref{eq:rounding:mwm}, \eqref{eq:rounding:spf}, and \eqref{eq:rounding:frm} convert the multipliers to a feasible primal solution $\{\tilde{\mathbf{c}}, \tilde{\mathbf{x}}, \tilde{\mathbf{q}}\}$ of the original problem \eqref{eq:prob:constrainted_routing_and_matching:primal}. In other words, we reduce the decision space of \eqref{eq:prob:constrainted_routing_and_matching:primal} from $\{\mathbf{c}, \mathbf{x}, \mathbf{q}\}$ to $\lambda$, where the space size is reduced from $\mathcal{O}(| \mathcal{E} | \cdot |\mathcal{F}| \cdot | \mathcal{L} |)$ to $\mathcal{O}(| \mathcal{L} |)$.
To solve the dual problem \eqref{eq:prob:constrainted_routing_and_matching:dual} and find the optimal Lagrange multipliers $\lambda^*$, we can use the subgradient descent method \cite{boyd2003subgradient} to update the Lagrange multipliers iteratively. However, as the constellation is highly dynamic, the dual problem needs to be solved in real-time, which is challenging for the traditional subgradient descent method due to its iterative nature.
Therefore, we propose to train a GNN to directly approximate the optimal Lagrange multipliers $\lambda^*$ via one forward propagation, which can be used to make real-time decisions on Lagrange multipliers and further on the LCT connections, routing paths and flow rates.

\subsection{Learning Lagrange Multipliers on Constellation Graphs}
The learning task is to train a GNN to approximate the optimal Lagrange multipliers $\lambda^*$ that maximizes the dual function $g(\lambda)$ over all possible constellation structures and traffic serving and demand rates.
\subsubsection{Constellation State as GNN Input}
The GNN operates on the satellite neighboring graph $\mathcal{G}^\text{SAT}$. The graph's adjacency matrix collects link capacities of all connectable LCT pairs between satellites and is defined as
\begin{equation}
  \begin{aligned}
    \mathbf{R} = 
    \big[R_{i,j} | R_{i,j} = 
    \begin{cases}
          \sum_{\{n,m\}\in\mathcal{E}_{i,j}} r_{n,m} ,\ \forall (i,j)\in\mathcal{L},\\
          0, \text{ otherwise.}
    \end{cases}
    \big],
  \end{aligned}
\end{equation}
which is the edge feature matrix of the graph.
The node features of the graph are the traffic serving and demand rates of the satellites, which is defined as
\begin{equation}
  \begin{aligned}
    \mathbf{s}_i = [Q_i, D_i]^{\rm T},\ \forall i\in\mathcal{I};\ \mathbf{S} = [\mathbf{s}_1,\dots,\mathbf{s}_N]^{\rm T},
  \end{aligned}
\end{equation}
where $\mathbf{S}$ collects all node features of the graph.

\subsubsection{GNN Output as Lagrange Multipliers}
The output of the GNN is the Lagrange multipliers $\lambda_{i,j}$ for all neighboring satellite pairs $(i,j)\in\mathcal{L}$, which is defined as
\begin{equation}
  \begin{aligned}
    \mathbf{\lambda} \approx \mu(\mathbf{S}, \mathbf{R} | \mathbf{w}),
  \end{aligned}
\end{equation}
where $\mu(\cdot | \mathbf{w})$ is the GNN with NN weight parameters $\mathbf{w}$.
Note that the Lagrange multipliers are non-negative, which can be ensured by applying a non-negative activation function at the output layer of the GNN. 
Furthermore, we can prove that the optimal Lagrange multipliers can be upper bounded by a finite value, which can be formally stated as follows.

\begin{lemma}
\label{lemma:bounded_lagrange_multipliers}
There exist the optimal Lagrange multipliers $\lambda^* = \{\lambda^*_{i,j}\}_{(i,j)\in\mathcal{L}}$ of \eqref{eq:prob:constrainted_routing_and_matching:dual}, $\lambda^* \in \argmax_{\lambda\geq 0} g(\lambda)$, such that $0 \leq \lambda^*_{i,j} \leq 1$, $\forall (i,j)\in\mathcal{L}$.
\begin{proof}
The proof is listed in appendix.
\end{proof}
\end{lemma}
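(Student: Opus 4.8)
The plan is to show that the dual objective $g$ can always be maximized within the box $[0,1]^{|\mathcal{L}|}$, by proving that clipping any nonnegative multiplier vector at $1$ never decreases $g$. For an arbitrary $\lambda\geq 0$ I would define its clipped version $\bar{\lambda}$ by $\bar{\lambda}_{i,j}=\min\{\lambda_{i,j},1\}$ and aim to establish $g(\bar{\lambda})\geq g(\lambda)$. Writing $g(\lambda)=-V_c(\lambda)-V_a(\lambda)$, where $V_a$ is the maximum-weight-matching value in term (a) and $V_c$ is the weighted flow-rate maximization value in term (c) of \eqref{eq:prob:constrainted_routing_and_matching:lagrangian_dual}, the whole statement reduces to two facts: $V_a(\bar{\lambda})\leq V_a(\lambda)$ and $V_c(\bar{\lambda})=V_c(\lambda)$.

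The first fact is immediate. The matching weights $(\lambda_{i_n,i_m}+\lambda_{i_m,i_n})r_{n,m}$ are nonnegative and affine in $\lambda$, so $V_a$, being a pointwise maximum of such affine functions, is nondecreasing in every component of $\lambda$; since $\bar{\lambda}\leq\lambda$ componentwise, $V_a(\bar{\lambda})\leq V_a(\lambda)$.

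The crux, and the step I expect to be the main obstacle, is the invariance $V_c(\bar{\lambda})=V_c(\lambda)$: a priori, lowering prices to $1$ could cheapen some route and thereby raise the attainable throughput reward, which would break the desired inequality. The key observation to rule this out is that $V_c$ depends on $\lambda$ only through the per-flow shortest-path costs $c^{s,s'}(\lambda)=\min_{\mathbf{x}^{s,s'}\in\mathcal{X}^{s,s'}}\sum_{(i,j)}\lambda_{i,j}x^{s,s'}_{i,j}$ of term (b), and that each flow contributes to $V_c$ only a reward $(1-c^{s,s'}(\lambda))$ that is discarded (the flow rate is set to zero over $\mathcal{Q}$) once this reward is nonpositive. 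I would then argue per flow in two cases. If $c^{s,s'}(\lambda)<1$, the optimal path has total cost below $1$, so, because all $\lambda_{i,j}\geq 0$, every link on it has price below $1$ and is untouched by clipping, while any path through a clipped link costs at least $1$ and cannot beat it; hence $c^{s,s'}(\bar{\lambda})=c^{s,s'}(\lambda)$. If instead $c^{s,s'}(\lambda)\geq 1$, then every path costs at least $1$ under $\lambda$: paths avoiding clipped links keep that cost under $\bar{\lambda}$, and paths using a clipped link still cost at least $1$, so $c^{s,s'}(\bar{\lambda})\geq 1$ and the flow stays at zero reward. In both cases the effective per-flow rewards entering the flow-rate program are unchanged, yielding $V_c(\bar{\lambda})=V_c(\lambda)$.

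Combining the two facts gives $g(\bar{\lambda})\geq g(\lambda)$ for every $\lambda\geq 0$. Since $g$ is concave (a pointwise minimum of affine functions) and continuous, it attains its maximum over the compact box $[0,1]^{|\mathcal{L}|}$ at some $\lambda^*$; for any $\lambda\geq 0$ we then have $g(\lambda)\leq g(\bar{\lambda})\leq g(\lambda^*)$, so this $\lambda^*$ is a global dual optimizer with $0\leq\lambda^*_{i,j}\leq 1$. This route has the added benefit of not presupposing that a dual optimizer exists, since the clipping argument constructs one inside the box.
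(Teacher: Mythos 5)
Your proposal is correct and follows essentially the same route as the paper's proof: clip $\lambda$ at $1$, show the maximum-weight-matching term can only decrease by monotonicity, and show the weighted flow-rate term is invariant via the same case split on whether a flow's shortest-path cost is below or at least $1$ (using that any path through a clipped link costs at least $1$). Your closing compactness argument is a small refinement over the paper, which implicitly presupposes a dual optimizer exists before clipping it, but the substance of the argument is the same.
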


The above lemma indicates that we can apply a bounded non-negative activation function, e.g., the sigmoid function, at the output layer of the GNN to ensure the output Lagrange multipliers are bounded between 0 and 1.

\subsubsection{GNN Learning Task Formulation}
The GNN learning task is to train the GNN to approximate the optimal Lagrange multipliers that maximizes the dual function $g(\lambda|\mathbf{S},\mathbf{R})$ for all possible constellation structure and traffic serving and demand rates, i.e., $\mathbf{S}$ and $\mathbf{R}$. 
Mathematically, the task can be formulated as an optimization problem as
\begin{equation}\label{eq:prob:gnn_training:original}
  \begin{aligned}
    \max_{\mathbf{w}} \expt_{(\mathcal{G}^\text{SAT},\mathcal{G}^\text{LCT})\sim\Gamma}\left[g\left(\mu(\mathbf{S}, \mathbf{R} | \mathbf{w})|\mathcal{G}^\text{SAT},\mathcal{G}^\text{LCT}\right) \right],
  \end{aligned}
\end{equation}
where $\Gamma$ is the distribution over all constellation setups, and $g(\cdot|\mathcal{G}^\text{SAT},\mathcal{G}^\text{LCT})$ is the dual function given the Lagrange multipliers returned from the GNN, $\mu(\mathbf{S}, \mathbf{R} | \mathbf{w})$.
The detail design of the GNN architecture, the loss and the training algorithm to solve the above learning task are presented in the next.

\section{GNN and Subgradient-Based Loss Function Design in DeepLaDu}\label{sec:learning_based_dual_optimization}
This section details the GNN architecture, the loss function and the training algorithm to solve the learning task in \eqref{eq:prob:gnn_training:original}.

\subsection{GNN Architecture}
\begin{figure}[t]
  \centering
  \includegraphics[scale=1.]{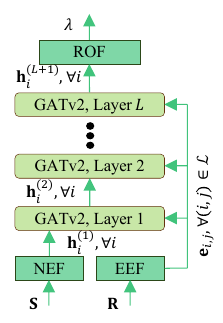}
  \vspace{-0.3cm}
  \caption{Illustration of the GNN architecture, $\mu(\mathbf{S}, \mathbf{R} | \mathbf{w})$, and its forward propagation and backpropagation processes. Here, NEF, EEF, and ROF represents the MLPs, namely node embedding function, edge embedding function, and read-out function, respectively.
  }
  \label{fig:gnn_architecture}
  \vspace{-0.3cm}
\end{figure}

We use a graph attention network (GAT), e.g., GATv2 \cite{brody2021how}, to build the GNN architecture, as illustrated in Fig. \ref{fig:gnn_architecture}.
It has learnable attention weights adaptive to the importance of neighboring nodes and edges during message passing, unlike uniform neighbor aggregation in traditional GNNs.

In detail, we first encode the node features $\mathbf{s}_i$ and the edge features $R_{i,j}$ into initial node embeddings $\mathbf{h}_i^{(0)}$ and edge embeddings $\mathbf{e}_{i,j}$ via learnable MLPs as
\begin{equation}
  \begin{aligned}
    \mathbf{h}_i^{(1)} = \mathrm{MLP}_\mathrm{NEF}(\mathbf{s}_i | \mathbf{w}_{\mathrm{NEF}} );\ \mathbf{e}_{i,j} = \mathrm{MLP}_\text{EEF}(R_{i,j}| \mathbf{w}_{\mathrm{EEF}}).
  \end{aligned}
\end{equation}
The GNN has $L$ layers, where the $l$-th layer, $l=1,\dots,L$, updates each node's embedding $\mathbf{h}_i^{(l-1)}$ to $\mathbf{h}_i^{(l)}$ by aggregating messages from its neighbors using multi-head GATv2 \cite{brody2021how} as
\begin{equation}\label{eq:gnn_gat_weighted_aggregation}
  \begin{aligned}
    \mathbf{h}_i^{(l+1)} = \sigma_\mathrm{ReLU}\Big(\frac{1}{U}\sum_{u=1}^U\sum_{j\in\mathcal{N}(i)} \alpha_{i,j}^{(l,u)} \mathbf{w}_{\mathrm{msg}}^{(l,u)} \mathbf{h}_j^{(l)}\Big),
  \end{aligned}
\end{equation}
where $\mathcal{N}(i) = \{j | (i,j)\in\mathcal{L}\}$ is the set of neighbors of node $i$ in the graph, $\sigma_\mathrm{ReLU}$ is the ReLU activation function, and $U$ is the number of attention heads.
Here, $\alpha_{i,j}^{(l,u)}$ in \eqref{eq:gnn_gat_weighted_aggregation} is the attention weight on the message from node $j$ to $i$ at layer $l$ and head $u$, computed as
\begin{equation}\label{eq:gnn_gat_attention_weight}
  \begin{aligned}
    \tilde{\alpha}_{i,j}^{(l,u)}= \mathbf{w}_{\mathrm{cmb}}^{(l,u)} \sigma_\mathrm{LkRe}(\mathbf{w}_{\mathrm{src}}^{(l,u)} \mathbf{h}_i^{(l,u)} \!+\! \mathbf{w}_{\mathrm{dst}}^{(l,u)} \mathbf{h}_j^{(l,u)} \!+\! \mathbf{w}_{\mathrm{edg}}^{(l,u)} \mathbf{e}_{j,i}),
  \end{aligned}
\end{equation}
where $\sigma_\mathrm{LkRe}$ is the Leaky ReLU activation function. The attention weights $\tilde{\alpha}_{i,j}^{(l,u)}$ are then normalized via softmax as
\begin{equation}
  \begin{aligned}
    \alpha_{i,j}^{(l,u)} = \frac{\exp(\tilde{\alpha}_{i,j}^{(l,u)})}{\sum_{k\in\mathcal{N}(i)} \exp(\tilde{\alpha}_{i,k}^{(l,u)})}. 
  \end{aligned}
\end{equation}
Here, $\mathbf{w}_{\mathrm{cmb}}^{(l,u)}$, $\mathbf{w}_{\mathrm{src}}^{(l,u)}$, $\mathbf{w}_{\mathrm{dst}}^{(l,u)}$, $\mathbf{w}_{\mathrm{edg}}^{(l,u)}$, and $\mathbf{w}_{\mathrm{msg}}^{(l,u)}$ in \eqref{eq:gnn_gat_attention_weight} are learnable parameters in GATv2 at layer $l$.
The final output of the GNN is computed via a readout function (ROF) MLP on the last layer's node embeddings as
\begin{equation}
  \begin{aligned}
    \lambda_{i,j} = \mathrm{MLP}_{\mathrm{ROF}}(\mathbf{h}_i^{(L+1)},\  \mathbf{h}_j^{(L+1)} |\ \mathbf{w}_{\mathrm{ROF}}),\ \forall (i,j)\in\mathcal{L}.
  \end{aligned}
\end{equation}
The union of parameters in node and edge feature embedding functions, GATv2 layers, and the readout functions is the GNN parameters, i.e., 
$\mathbf{w}\allowbreak=\allowbreak\cup_{l,u} \{\mathbf{w}_{\mathrm{msg}}^{(l,u)}, \mathbf{w}_{\mathrm{cmb}}^{(l,u)}, \mathbf{w}_{\mathrm{src}}^{(l,u)}, \mathbf{w}_{\mathrm{dst}}^{(l,u)}, \mathbf{w}_{\mathrm{edg}}^{(l,u)}\}\cup\mathbf{w}_{\mathrm{NEF}}\cup\mathbf{w}_{\mathrm{EEF}}\cup\mathbf{w}_{\mathrm{ROF}}$.
The detailed architecture of the MLP is listed in Section \ref{sec:simulation_results}. 
Note that the readout function's last layer uses a sigmoid activation function to ensure the output Lagrange multipliers are in the range $[0,1]$, as stated in Lemma \ref{lemma:bounded_lagrange_multipliers}.

\subsection{Subgradient-Based Learning Algorithm}

\begin{algorithm}[!t]
\caption{DeepLaDu Training Procedure for Joint Link Matching and Traffic Routing}\label{alg:gnn_training}
\begin{algorithmic}[1]
\STATE Initialize GNN parameters $\mathbf{w}$ randomly.
\STATE Initialize constellation/traffic distribution $\Gamma$.
\FOR{ $k = 1,2,\dots,K$}
  \STATE Sample $\mathcal{G}^\text{SAT}, \mathcal{G}^\text{LCT}$ from the distribution $\Gamma$.
  \STATE Extract node, edge features $\mathbf{S}$, $\mathbf{R}$ from $\mathcal{G}^\text{SAT}$, $\mathcal{G}^\text{LCT}$.
  \STATE Compute the multipliers as $\lambda = \mu(\mathbf{S}, \mathbf{R} |\mathbf{w})$.
  \STATE Compute the subgradient $\delta(\lambda)$ in \eqref{eq:dual_function_subgradient} as \eqref{eq:routine:mwm_given_lambda}\eqref{eq:routine:spf_given_lambda}\eqref{eq:routine:flow_rate_given_routing_cost}.
  \STATE Update the GNN parameters as \eqref{eq:gnn_parameter_update}.
  \STATE Update the learning rate as \eqref{eq:learning_rate_update}.
\ENDFOR
\STATE \textbf{Return} the trained GNN parameters $\mathbf{w}$.
\end{algorithmic}
\end{algorithm}

\subsubsection{Subgradient-Based Loss Function}
Define the loss function accordingly to maximize the objective in \eqref{eq:prob:gnn_training:original} as
\begin{equation}\label{eq:loss:gnn_training}
  \begin{aligned}
    L(\mathbf{w}) = -\expt_{(\mathcal{G}^\text{SAT},\mathcal{G}^\text{LCT})\sim\Gamma}\left[g\left(\mu(\mathbf{S}, \mathbf{R} | \mathbf{w})|\mathcal{G}^\text{SAT},\mathcal{G}^\text{LCT}\right)\right].
  \end{aligned}
\end{equation}
The dual function $g(\lambda|\mathcal{G}^\text{SAT},\mathcal{G}^\text{LCT})$ is non-differentiable with respect to $\lambda$ as it involves solving a mixed-integer programming problem \eqref{eq:prob:constrainted_routing_and_matching:relaxed}.
However, we can use the subgradient \cite{boyd2004convex,boyd2003subgradient} of the dual function to approximate the gradient of the loss function $L(\mathbf{w})$ with Lagrange multipliers, e.g.,
\begin{equation}\label{eq:loss:gnn_training:gradient_approximation}
  \begin{aligned}
    &\nabla_{\mathbf{w}} L(\mathbf{w})\\
    =& \expt_{(\mathbf{S},\mathbf{R})\sim\Gamma}\Big[-
    \sum_{i,j}\frac{\partial g(\lambda|\mathbf{S},\mathbf{R})}{\partial \lambda_{i,j}} \nabla_{\mathbf{w}}\lambda_{i,j}|_{\lambda=\mu(\mathbf{S}, \mathbf{R} | \mathbf{w})} \Big] \\
    \approx& \expt_{(\mathbf{S},\mathbf{R})\sim\Gamma}[-\delta(\lambda)_{i,j}\nabla_{\mathbf{w}}\lambda_{i,j} |\lambda=\mu(\mathbf{S}, \mathbf{R} | \mathbf{w})],\\
  \end{aligned}
\end{equation}
where the derivative of the dual function $\frac{\partial g(\lambda|\mathbf{S},\mathbf{R})}{\partial \lambda_{i,j}}$ is approximated by its subgradient w.r.t. $\lambda_{i,j}$, $\delta(\lambda)_{i,j}$, $\forall (i,j)\in\mathcal{L}$.

\subsubsection{Computing Subgradient via Decomposed Subproblems}
As the dual function $g(\lambda|\mathcal{G}^\text{SAT},\mathcal{G}^\text{LCT})$ is a piecewise linear function with respect to $\lambda$ with subgradient computed as \cite{boyd2004convex}
\begin{equation}\label{eq:dual_function_subgradient}
  \begin{aligned}
   &\delta(\lambda)_{i,j}=\!\!\!
   \sum_{(s,s')\in\mathcal{F}}\!\!\! \hat{q}^{s,s'}(\lambda) \hat{x}^{s,s'}_{i,j}(\lambda) - \!\!\!\!\!\!\sum_{\{n,m\}\in\mathcal{E}_{i,j}} \!\!\!\!r_{n,m} \hat{c}_{n,m}(\lambda),
  \end{aligned}
\end{equation}
where $\hat{q}^{s,s'}(\lambda)$, $\hat{x}^{s,s'}_{i,j}(\lambda)$, and $\hat{c}_{n,m}(\lambda)$ are the optimal decisions given $\lambda$ in \eqref{eq:prob:constrainted_routing_and_matching:relaxed}.
To obtain these optimal decisions, we observe that the relaxed problem \eqref{eq:prob:constrainted_routing_and_matching:relaxed} can be decomposed into three independent subproblems, which can be solved separately, as shown in \eqref{eq:prob:constrainted_routing_and_matching:lagrangian_dual}.
Exploiting this structure, we solve the three subproblems in \eqref{eq:prob:constrainted_routing_and_matching:lagrangian_dual} sequentially as follows.

First, the part (a) in \eqref{eq:prob:constrainted_routing_and_matching:lagrangian_dual} is an MWM problem on $\mathcal{G}^\text{LCT}=(\mathcal{N},\mathcal{E})$ with edge weights $ \{\lambda_{i_n,i_m}r_{n,m}\}_{\{n,m\}\in\mathcal{E}}$, solved as
\begin{equation}\label{eq:routine:mwm_given_lambda}
  \begin{aligned}
    \hat{\mathbf{c}}(\lambda) = \mathrm{MWM}(\mathcal{N}, \mathcal{E}, \{\lambda_{i_n,i_m}r_{n,m}\}_{\{n,m\}\in\mathcal{E}}).
  \end{aligned}
\end{equation}
Here, we use the greedy weight matching algorithm \cite{deligkas2017computational} to approximately solve the MWM problem at a low computational complexity, which sequentially matches the LCT pairs from higher weights to lower ones.
Next, the shortest path routing algorithm, Dijkstra's algorithm, is used to solve the minimum cost routing problem in the part (b).
Note that the routing decision $\mathbf{x}^{s,s'}$ for each source--destination satellite pair $(s,s')\in\mathcal{F}$ is independent of each other, and thus can be solved in parallel on the graph $\mathcal{G}^\text{SAT}=(\mathcal{I}, \mathcal{L})$ as
\begin{equation}\label{eq:routine:spf_given_lambda}
  \begin{aligned}
    \hat{\mathbf{x}}^{s,s'}(\lambda) = \mathrm{SPF}(s,s',\mathcal{I}, \mathcal{L}, \lambda),\ \forall (s,s')\in\mathcal{F}.
  \end{aligned}
\end{equation}
With the above routing decisions, we can compute the routing costs for each source--destination satellite pair $(s,s')\in\mathcal{F}$ as $\sum_{(i,j)\in\mathcal{L}}\lambda_{i,j} \hat{x}^{s,s'}_{i,j} (\lambda)$.
Based on the routing costs, we can compute the FRM problem by solving the part (c) in \eqref{eq:prob:constrainted_routing_and_matching:lagrangian_dual} as
\begin{equation}\label{eq:routine:flow_rate_given_routing_cost}
  \begin{aligned}
    \hat{\mathbf{q}}(\lambda) = \argmax_{
        \substack{
          \mathbf{q}\in\mathcal{Q}
        }
      }
      \sum_{(s,s')\in\mathcal{F} } q^{s,s'} (1 - \sum_{(i,j)\in\mathcal{L}}\lambda_{i,j} \hat{x}^{s,s'}_{i,j} (\lambda)),
  \end{aligned}
\end{equation}
where the objective is to maximize the total flow rates of all source--destination satellite pairs $(s,s')\in\mathcal{F}$, and the routing costs penalize the flow rates on high-cost traffic flows.
Here, all constraints in $\mathcal{Q}$ are linear constraints, and the objective is linear in $\mathbf{q}$, which implies that the problem \eqref{eq:routine:flow_rate_given_routing_cost} is a linear program and that can be solved efficiently.

\begin{figure}[!t]
  \centering
  \includegraphics[scale=0.795]{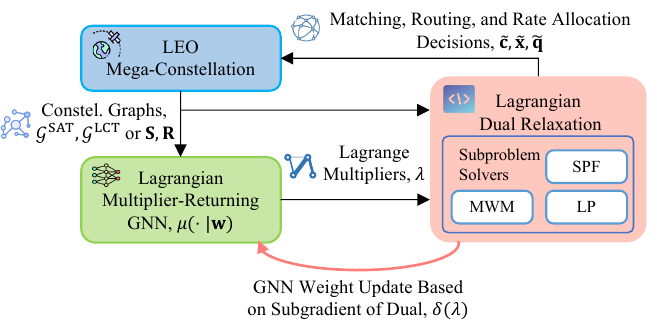}
  \vspace{-0.3cm}
  \caption{Illustration of the interaction on the GNN in DeepLaDu with the Lagrangian dual relaxation and the constellation.}
  \label{fig:ladugl_framework}
  \vspace{-0.3cm}
\end{figure}

\subsubsection{Summary of GNN Training Algorithm}
The training algorithm of the GNN is summarized in Algorithm \ref{alg:gnn_training}, and the interaction among the GNN, the Lagrangian dual relaxation, and the constellation is illustrated in Fig. \ref{fig:ladugl_framework}.
In each iteration, we first sample the constellation structure and traffic serving and demand rates at random. Then, we compute the Lagrange multipliers using the GNN, and further compute the subgradient of the dual function $g(\lambda)$ as \eqref{eq:dual_function_subgradient}. The approximated gradient in \eqref{eq:loss:gnn_training:gradient_approximation} is then used to update the GNN parameters as
\begin{equation}\label{eq:gnn_parameter_update}
  \begin{aligned}
    \mathbf{w}^{[k+1]} = \mathbf{w}^{[k]} + \alpha^{[k]} \sum_{(i,j)}
    \delta(\lambda)_{i,j} \nabla_{\mathbf{w}^{[k]}}\lambda_{i,j}|_{\lambda=\mu(\mathbf{S}, \mathbf{R} | \mathbf{w}^{[k]})} ,
  \end{aligned}
\end{equation}
where $\alpha^{[k]}$ is the learning rate at the $k$-th iteration and $\mathbf{w}^{[k]}$ are the GNN parameters at the $k$-th iteration.
We assume that the learning rate is decaying as
\begin{equation}\label{eq:learning_rate_update}
  \begin{aligned}
    \alpha^{[k]} = \frac{\alpha_0}{k^\beta}, \ k = 1,2,\dots,K.
  \end{aligned}
\end{equation}
$0<\alpha_0<1$ is a constant and $\beta $ is the decaying rate in $[0.5, 1)$.

\section{Convergence and Complexity of DeepLaDu}\label{sec:convergence_and_complexity_analysis}
\subsection{Convergence Analysis}
The convergence of the GNN training algorithm relies on the following assumptions.
\begin{assumption}
$\sum_{(i,j)}\delta(\lambda)_{i,j} \nabla_{\mathbf{w}}\lambda_{i,j}|_{\lambda=\mu(\mathbf{S}, \mathbf{R} | \mathbf{w})}$ is an unbiased estimator of $\nabla_{\mathbf{w}} L(\mathbf{w})$ with a bounded variance as
\begin{equation}
  \begin{aligned}
    &\expt[\sum_{(i,j)}\delta(\lambda)_{i,j} \nabla_{\mathbf{w}}\lambda_{i,j}|_{\lambda=\mu(\mathbf{S}, \mathbf{R} | \mathbf{w})}] \approx \nabla_{\mathbf{w}} L(\mathbf{w});\\
    &\expt[\|\sum_{(i,j)}\delta(\lambda)_{i,j} \nabla_{\mathbf{w}}\lambda_{i,j}|_{\lambda=\mu(\mathbf{S}, \mathbf{R} | \mathbf{w})} - \nabla_{\mathbf{w}} L(\mathbf{w})\|^2] \leq \sigma_L^2,
\end{aligned}
\end{equation}
\end{assumption}
\begin{assumption}
The loss function $L(\mathbf{w})$ is $\gamma$-Lipschitz smooth with respect to the GNN parameters $\mathbf{w}$.
\end{assumption}
Then, the convergence of the GNN training algorithm is guaranteed with the decaying learning rate as
\begin{theorem}\label{theorem:convergence:gnn_training}
The GNN gradient in Algorithm \ref{alg:gnn_training} converges to a stationary point, i.e., $ \lim_{k\to\infty} \min \{\|\nabla_{\mathbf{w}^{[i]}} L(\mathbf{w}^{[i]})\|^2\}_{i=1}^{k} = 0$, when $0.5\leq\beta<1 $ with a convergence rate as $\mathcal{O}(k^{-(1-\beta)})$.
\end{theorem}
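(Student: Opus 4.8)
The plan is to treat Algorithm~\ref{alg:gnn_training} as plain stochastic gradient descent on the smooth but non-convex loss $L(\mathbf{w})$ and to run the classical descent-lemma telescoping argument. First I would fix the sign convention: by the gradient approximation in \eqref{eq:loss:gnn_training:gradient_approximation}, the update direction $\sum_{(i,j)}\delta(\lambda)_{i,j}\nabla_{\mathbf{w}}\lambda_{i,j}$ equals $-\nabla_{\mathbf{w}}L(\mathbf{w})$ in expectation (the first assumption), so \eqref{eq:gnn_parameter_update} reads $\mathbf{w}^{[k+1]}=\mathbf{w}^{[k]}-\alpha^{[k]}\tilde{g}^{[k]}$ with $\expt[\tilde{g}^{[k]}]=\nabla_{\mathbf{w}}L(\mathbf{w}^{[k]})$ and $\expt[\|\tilde{g}^{[k]}-\nabla_{\mathbf{w}}L(\mathbf{w}^{[k]})\|^2]\le\sigma_L^2$, i.e., genuine descent on $L$.

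Next I would apply the $\gamma$-smoothness of the second assumption (the descent lemma) to one step and take the conditional expectation given $\mathbf{w}^{[k]}$. Using unbiasedness together with $\expt[\|\tilde{g}^{[k]}\|^2]\le\|\nabla L(\mathbf{w}^{[k]})\|^2+\sigma_L^2$ yields
\begin{equation}
\expt[L(\mathbf{w}^{[k+1]})]\le L(\mathbf{w}^{[k]})-\Big(\alpha^{[k]}-\tfrac{\gamma(\alpha^{[k]})^2}{2}\Big)\|\nabla L(\mathbf{w}^{[k]})\|^2+\tfrac{\gamma\sigma_L^2}{2}(\alpha^{[k]})^2.
\end{equation}
Since $\alpha^{[k]}=\alpha_0/k^\beta\to0$, for every $k$ past some index the coefficient obeys $\alpha^{[k]}-\gamma(\alpha^{[k]})^2/2\ge\alpha^{[k]}/2$. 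Rearranging, summing over $k=1,\dots,K$ and telescoping gives
\begin{equation}
\sum_{k=1}^{K}\tfrac{\alpha^{[k]}}{2}\expt[\|\nabla L(\mathbf{w}^{[k]})\|^2]\le L(\mathbf{w}^{[1]})-L^\star+\tfrac{\gamma\sigma_L^2}{2}\sum_{k=1}^{K}(\alpha^{[k]})^2,
\end{equation}
where $L^\star>-\infty$ is a lower bound on the loss, which exists because Lemma~\ref{lemma:bounded_lagrange_multipliers} together with the finiteness of the total network throughput makes $g$ (hence $-L$) bounded.

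Finally I would lower-bound the left side by $\min_{k\le K}\expt[\|\nabla L(\mathbf{w}^{[k]})\|^2]\cdot\tfrac12\sum_{k}\alpha^{[k]}$ and divide, obtaining
\begin{equation}
\min_{1\le k\le K}\expt[\|\nabla L(\mathbf{w}^{[k]})\|^2]\le\frac{2\big(L(\mathbf{w}^{[1]})-L^\star\big)+\gamma\sigma_L^2\sum_{k=1}^{K}(\alpha^{[k]})^2}{\sum_{k=1}^{K}\alpha^{[k]}}.
\end{equation}
The rate then follows from the $p$-series estimates $\sum_{k=1}^{K}k^{-\beta}=\Theta(K^{1-\beta})$ for $\beta<1$ and $\sum_{k=1}^{K}k^{-2\beta}=\Theta(1)$ for $\beta>1/2$: the numerator is $\Theta(1)$ and the bound decays as $\mathcal{O}(K^{-(1-\beta)})$, forcing the min-gradient norm to zero as $K\to\infty$.

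The main obstacle I anticipate is the boundary case $\beta=1/2$, where $\sum_k(\alpha^{[k]})^2=\Theta(\ln K)$ is no longer summable; here the same division gives $\mathcal{O}(\ln K/\sqrt{K})$, which still vanishes and is consistent with the stated $\mathcal{O}(k^{-(1-\beta)})$ up to the logarithmic factor, so I would flag this explicitly rather than hide it. A secondary subtlety is that the first assumption only asserts unbiasedness approximately, since $\delta(\lambda)$ is a subgradient that approximates the true gradient of the nonsmooth dual function; I would either absorb the residual bias into the variance constant $\sigma_L^2$ or note that the conclusion holds up to that bias term. I would also emphasize that the established limit is in expectation, which is the natural reading of the stationarity claim.
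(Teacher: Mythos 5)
Your proposal follows essentially the same route as the paper's proof: the descent lemma under $\gamma$-smoothness, the coefficient bound $\alpha^{[k]}-\gamma(\alpha^{[k]})^2/2\ge\alpha^{[k]}/2$ once the step size is small enough, telescoping, bounding the loss via the bounded multipliers and throughput, and dividing by $\sum_k\alpha^{[k]}$ with the $p$-series estimates to get the $\mathcal{O}(k^{-(1-\beta)})$ rate. Your explicit flagging of the $\beta=1/2$ boundary case (where $\sum_k(\alpha^{[k]})^2=\Theta(\ln k)$ is not summable, leaving a logarithmic factor) and of the approximate unbiasedness of the subgradient estimator is in fact slightly more careful than the paper's own treatment, which asserts convergence of both series for $0.5\le\beta<1$ without this caveat.
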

\begin{proof}
  The proof relies on the assumptions above and the boundedness of optimal Lagrange multipliers in Lemma \ref{lemma:bounded_lagrange_multipliers}, and is listed in appendix.
\end{proof}
Theorem \ref{theorem:convergence:gnn_training} establishes that the proposed DeepLaDu training procedure converges to a stationary point of the duality-guided learning objective, despite the non-differentiability induced by the underlying mixed-integer structure. This result is significant for two reasons. First, it provides a formal justification for replacing iterative dual updates with a single-pass GNN inference, showing that the learned congestion prices remain theoretically well-behaved and stable under standard stochastic approximation conditions. Second, the convergence guarantee directly links learning dynamics to the physical and network constraints of the constellation through bounded Lagrange multipliers, ensuring that the learned prices retain their economic interpretation as link-level congestion signals. As a result, DeepLaDu embeds principled optimization structure into the learning process of mega-constellation networks, where convergence and decision interpretability are critical.

\subsection{Complexity Analysis}
For simplicity, we denote the number of connectable LCT pairs as $E=|\mathcal{E}|$, and the number of neighboring satellite pairs is upper bounded as $|\mathcal{L}|\leq E$ since each LCT pair connects two neighboring satellites.
\begin{cor}\label{cor:complexity:dual_optimization}
    1) The complexity of DeepLaDu in Algorithm \ref{alg:gnn_training} is $\mathcal{O}(K(E\log E + I E\log N  + \mathrm{poly}(I)))$ for $K$ iterations. 2) The complexity of converting the optimized Lagrange multipliers to the matching, routing, and rate allocation decisions in Section \ref{subsec:dual_variable_guided_matching_and_routing} is $\mathcal{O}(E\log E + I E\log N  + \mathrm{poly}(I+E))$.
\end{cor}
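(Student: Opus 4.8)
The plan is to bound the per-iteration cost of Algorithm~\ref{alg:gnn_training} and multiply by $K$ for Part~1, then separately tally the three solver calls of Section~\ref{subsec:dual_variable_guided_matching_and_routing} for Part~2. First I would observe that a single training iteration consists of (i) one GNN forward pass $\mu(\mathbf{S},\mathbf{R}\,|\,\mathbf{w})$, (ii) the subgradient evaluation \eqref{eq:dual_function_subgradient}, which requires solving the three decomposed subproblems \eqref{eq:routine:mwm_given_lambda}, \eqref{eq:routine:spf_given_lambda}, and \eqref{eq:routine:flow_rate_given_routing_cost}, and (iii) the backpropagation update \eqref{eq:gnn_parameter_update}. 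Throughout, I would treat the per-satellite LCT count $N'$, the number of nearest gateways $M$, and the GNN depth $L$ and head count $U$ as constants, so that $N=N'I=\mathcal{O}(I)$, $|\mathcal{F}|=\mathcal{O}(MI)=\mathcal{O}(I)$, and $\log N=\mathcal{O}(\log I)$.

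For the per-iteration terms I would charge each component separately. The greedy MWM \eqref{eq:routine:mwm_given_lambda} sorts the $E$ edge weights $\{\lambda_{i_n,i_m}r_{n,m}\}$ and makes one sweep, costing $\mathcal{O}(E\log E)$. The minimum-cost routing \eqref{eq:routine:spf_given_lambda} runs one heap-based Dijkstra per flow on $\mathcal{G}^{\text{SAT}}$, where $|\mathcal{L}|\le E$ since every satellite link is witnessed by at least one LCT pair and, on the connected routing graph, $I-1\le|\mathcal{L}|$, so $I+|\mathcal{L}|=\mathcal{O}(E)$ and each run costs $\mathcal{O}(E\log N)$; with $|\mathcal{F}|=\mathcal{O}(I)$ flows this gives $\mathcal{O}(IE\log N)$. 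The weighted flow-rate LP \eqref{eq:routine:flow_rate_given_routing_cost} has $|\mathcal{F}|=\mathcal{O}(I)$ variables subject only to the serving/demand constraints \eqref{eq:const:traffic:flow_rate} (i.e. $\mathbf{q}\in\mathcal{Q}$), so it is an LP of size $\mathcal{O}(I)$ solvable in $\mathrm{poly}(I)$. Finally, with constant $L$ and $U$ the GAT forward and backward passes touch each of the $\mathcal{O}(I)$ nodes and $|\mathcal{L}|\le E$ edges a constant number of times, costing $\mathcal{O}(E)$, which is dominated by $\mathcal{O}(E\log E)$. Summing these and multiplying by $K$ yields Part~1.

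For Part~2 I would reuse the same accounting for the matching \eqref{eq:rounding:mwm} ($\mathcal{O}(E\log E)$) and the routing \eqref{eq:rounding:spf} ($\mathcal{O}(IE\log N)$ over $\mathcal{G}'^{\text{SAT}}$, whose edge set $\mathcal{L}'\subseteq\mathcal{L}$ only shrinks the bound). The single point of departure is the FRM \eqref{eq:rounding:frm}: here, unlike in \eqref{eq:routine:flow_rate_given_routing_cost}, the coupling link-capacity constraints \eqref{eq:const:path:flow_rate} are retained with $\hat{\mathbf{c}}$ and $\tilde{\mathbf{x}}$ fixed. These contribute $|\mathcal{L}|\le E$ additional linear constraints on top of the $\mathcal{O}(I)$ constraints of $\mathcal{Q}$, so the LP now has $\mathcal{O}(I)$ variables and $\mathcal{O}(I+E)$ constraints and is solved in $\mathrm{poly}(I+E)$. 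This is precisely why the polynomial term upgrades from $\mathrm{poly}(I)$ in Part~1 to $\mathrm{poly}(I+E)$ in Part~2, while the dominant combinatorial terms $E\log E$ and $IE\log N$ are unchanged.

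The step I expect to be the main obstacle is the bookkeeping that keeps the routing term linear in $I$ rather than quadratic: this hinges on the bound $|\mathcal{F}|=\mathcal{O}(I)$, which in turn relies on the $M$-nearest-gateway restriction \eqref{eq:const:traffic:flow_pairs_init} with $M$ constant, together with the identity $I+|\mathcal{L}|=\mathcal{O}(E)$ on the connected routing graph. I would also be careful to attribute the correct constraint set to each LP, since conflating \eqref{eq:routine:flow_rate_given_routing_cost} with \eqref{eq:rounding:frm} would erase the $\mathrm{poly}(I)$ versus $\mathrm{poly}(I+E)$ distinction that is the substantive content of the two parts. Once $|\mathcal{F}|=\mathcal{O}(I)$, $|\mathcal{L}|\le E$, and $\log N=\mathcal{O}(\log I)$ are fixed, the remaining arguments are the textbook complexities of greedy matching, heap-based Dijkstra, and polynomial-time linear programming, so no deeper difficulty arises.
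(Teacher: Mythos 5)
Your proposal is correct and follows essentially the same route as the paper's proof: constant-size MLPs, $\mathcal{O}(E)$ GAT message passing, $\mathcal{O}(E\log E)$ greedy matching, $\mathcal{O}(IE\log N)$ Dijkstra over $|\mathcal{F}|=\mathcal{O}(I)$ flows, and a polynomial-time LP, summed per iteration and multiplied by $K$. In fact, you are more explicit than the paper (which only says Part 2 is ``derived analogously'') in pinpointing that the $\mathrm{poly}(I)$ versus $\mathrm{poly}(I+E)$ gap comes from the FRM in \eqref{eq:rounding:frm} retaining the $|\mathcal{L}|\le E$ link-capacity constraints \eqref{eq:const:path:flow_rate} that are absent from \eqref{eq:routine:flow_rate_given_routing_cost}.
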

\begin{proof}
We first analyze the complexity of MLPs and GATv2 in the GNN. 
Configuring hidden layer dimensions are proportional to either input or output dimensions, MLPs have a complexity is quadratic to their input and output dimensions \cite{goodfellow2016deep}.
In our case, MLPs have constant dimensions regardless of the number of satellites $I$ or LCTs $N$. 
Thus, we write their complexity as $\mathcal{O}(1)$ for each forward propagation or backpropagation.
The complexity of the GATv2 layer in \eqref{eq:gnn_gat_weighted_aggregation} and \eqref{eq:gnn_gat_attention_weight} is $\mathcal{O}(E)$ for forward propagation and backpropagation, and $E$ is the upper bound on the number of edges \cite{brody2021how}.

The number of source--destination satellite pairs is bounded as $|\mathcal{F}| \leq I \cdot M\approx\mathcal{O}(I)$.
For approximating the MWM using the greedy weight matching in \eqref{eq:routine:mwm_given_lambda}, we need to sort the edge weights, which takes $\mathcal{O}(E\log E)$ complexity. 
When matching the LCT pairs, we need to iterate through all edges in the constellation and check whether the LCT pairs have been matched before, which approximately takes $\mathcal{O}(E)$ complexity. 
Dijkstra's algorithm for computing the shortest path in \eqref{eq:routine:spf_given_lambda} takes $\mathcal{O}(|\mathcal{F}|\cdot E\log N )\approx\mathcal{O}(I E\log N )$ for all $|\mathcal{F}|$ source--destination satellite pairs \cite{cormen2022introduction}.
The linear programming problem in \eqref{eq:routine:flow_rate_given_routing_cost} can be efficiently solved by the simplex method taking polynomial time at the number of constraints in $\mathcal{Q}$, i.e., $\mathcal{O}(\mathrm{poly}(I))$ \cite{vershynin2009hirsch}.
Collecting these complexities, we have the complexity on each iteration of the GNN training algorithm in Algorithm \ref{alg:gnn_training} as $\mathcal{O}(E + E\log E + I E\log N  + \mathrm{poly}(I))$, and the overall complexity for $K$ iterations is stated.
The complexity of converting the optimized Lagrange multipliers can be derived analogously.
\end{proof}

This complexity analysis confirms that DeepLaDu achieves a favorable scalability profile for large, time-varying LEO constellations. The overall cost grows polynomially with the constellation size and is dominated by standard graph operations, avoiding the combinatorial explosion inherent in solving the original mixed-integer problem directly. More importantly, the dependence on the iteration count $K$ is confined to the offline GNN training phase, while online decision-making requires only a single forward inference followed by one round of matching, routing, and rate allocation. As a result, DeepLaDu effectively transforms an otherwise iterative and latency-intensive dual optimization process into a one-shot, computationally efficient procedure, making it well suited for real-time LISL reconfiguration within the short coherent time of dynamic mega-constellation graphs.

\section{Simulation Results}\label{sec:simulation_results}
This section evaluates the proposed methods.
\subsection{Baseline Methods}
\begin{itemize}
  \item \textbf{End-to-end Learning (PG, DDPG)}: Reinforcement learning (RL) is compared to train the GNN to make link matching and traffic routing decisions. Note that due to the large decision space of link matching and traffic routing in the primal problem \eqref{eq:prob:constrainted_routing_and_matching:primal}, it is difficult to directly train the GNN to output the matching and routing decisions. Thus, we use the GNN to output the Lagrange multipliers for all neighboring satellite pairs, similar to our proposed method, and then convert the multipliers to the matching and routing decisions. The RL algorithms use the network throughput as the reward. Specifically, two cases of RL are compared: \textbf{PG:} the policy gradient (PG) \cite{sutton1999policy} assumes that the GNN returns a logit-normal distribution of the Lagrange multipliers over $[0,1]$. The multipliers are sampled from the distribution and the backpropagation is performed on the log-derivative trick to estimate the gradient of the network throughput w.r.t. each satellite pair's Lagrange multiplier. \textbf{DDPG:} the deep deterministic PG (DDPG) \cite{lillicrap2019continuous} with an additional GNN (a critic) is used to approximate the network throughput for the given constellation state and multipliers, referred to as an actor-critic algorithm. Here, the gradient w.r.t. multipliers is estimated using back-propagation on the critic and then the multiplier-returning GNN (the actor).
  \item \textbf{Joint Optimization Using Iterative Subgradient Decsent (LaDu)}: We use the iterative subgradient descent method designed in our previous work \cite{gu2025joint} to optimize the dual variables individually for a given instance of the constellation state, namely LaDu. The LaDu method iteratively updates the dual variables based on the subgradient of the dual function until a given number, $K$, of iterations is reached. We write the LaDu with $K$ iterations as LaDu-$K$. Note that LaDu optimizes the dual variables for each constellation state from with an initialization $\lambda_{i,j}=1$ $\forall (i,j)$, unlike the proposed DeepLaDu method that learns a generalizable GNN to predict the dual variables using one forward inference.
  \item \textbf{Heuristic Methods (MRate, +Grid, Rand)}: We use heuristic methods to perform link matching and traffic routing without optimizing the dual variables. Specifically, we first match the LCT pairs based on a heuristic link matching method and then route the traffic flows using the weighted SPF algorithm. Finally, we compute the maximum flow rates based on the matched LCT pairs and routed paths. The heuristic link matching and routing methods include: \textbf{+Grid:} A grid-based link matching \cite{bhattacherjee2019network} prioritizes the LCT pairs that are more aligned in their pointing directions, e.g., pairs with higher alignment $\mathbf{d}_{i_n,i_m} \cdot \mathbf{u}_{n,m}+\mathbf{d}_{i_m,i_n} \cdot \mathbf{u}_{n,m}$. \textbf{Rand}: Random link matching randomly selects the LCT pairs to match, which is achieved by randomly setting the weights of LCT pairs and then applying the matching. \textbf{MRate:} Maximum link rate matching sets the weights of LCT pairs as their maximum transmission rates, $r_{n,m}$, as the objective is to maximize the network throughput. For compared heuristics, we use open shortest path first (OSPF) routing \cite{moy1998ospf} where the flows are routed using the link weights that are the reciprocal of aggregate rates of satellite pairs $1/\sum_{\{n,m\}\in \mathcal{E}_{i,j}} c_{n,m}r_{n,m}$.
  \item \textbf{Non-Joint Optimization (SaTE)}: 
  The non-joint optimization in SaTE \cite{wu2025sate} is compared, which solves the traffic rate allocations after links are matched and traffic is routed using heuristic methods, e.g., using +Grid and SPF.
  Note that SaTE only optimizes the traffic flow rates to maximize the network throughput, while the link matching and traffic routing are not jointly optimized.
  The maximum network throughput given the topology and routes is computed optimally using the linear programming solver \cite{wu2025sate}.
\end{itemize}

\subsection{Simulation Setup}
The simulations run on a workstation with an Intel Core Ultra 9 285K (24 cores) and 32 GB of memory and an NVIDIA RTX 5090 GPU.
We emulate the constellation using Starlink TLEs from CelesTrak~\cite{CelesTrak}, snapshot at $\mathcal{T}_0=$ UTC 2025-07-16 16{:}00. To vary the scale of the constellation and emulate different states, we uniformly sample $I$ satellites from the dataset. Each satellite mounts $N'=2$ LCTs oriented along and against the satellite's velocity vector, with attitude control preserving these mounting directions. Optical and receiver parameters follow~\cite{kaymak2018survey}: aperture $A=0.01~\mathrm{m}^2$, responsivity $\Psi=0.5~\mathrm{A/W}$, and RMS noise current $\sigma_{\mathrm{N}}=3{\times}10^{-7}~\mathrm{A}$~\cite{maxim_spf_transimpedance}. We set transmit power $P_0=20$~W, bandwidth $B=1$~GHz, wavelength $1.55~\mu\mathrm{m}$, and beam divergence $100~\mu\mathrm{rad}$, which yield waist $W_0=9.87{\times}10^{-3}$~m and Rayleigh range $z_{\mathrm{R}}=1.97{\times}10^{3}$~m~\cite{saleh2019fundamentalsa}. Pointing jitter is fixed at $\sigma_{\mathrm{J}}=10~\mu\mathrm{rad}$ and the outage probability threshold at $\epsilon=10^{-3}$. Each LCT has a FOR $\theta=60^\circ$, and links are connectable only between satellites with a distance up to $\hat{z}=3000$~km.
Traffic demand is derived from real-world population data~\cite{schiavina2023ghspop} within a $\approx 200$~km coverage per satellite.
We assume $0.01\%$ of real-world population are active; thus $U_i$ is modeled as Poisson with mean equal to the covered population. Each user requests $D=0.1$~Gbps. We place $100$ gateways by sampling SatNOGS sites~\cite{satnogs}. When a gateway is visible, a satellite can source up to $Q=20$~Gbps. Any residual demand is routed toward the $M=5$ nearest gateway-connected satellites, which defines the serving–demand pairs $\mathcal{F}$ in \eqref{eq:const:traffic:flow_pairs_init}. Unless explicit mentioned, we set the number of satellites as $I=1000$.

The NEF and EEF are both configured as a single linear layer with $64$ hidden units. The ROF is configured as a $3$-layer MLP with $64$ hidden units in each layer. 
The GATv2 layers are configured with $4$ attention heads and $64$ hidden units in each head.
The activation function for all MLPs is ReLU, except for the last layer of the ROF using a sigmoid function to ensure that it outputs Lagrange multipliers within $[0,1]$.
The learning rate configurations are $\alpha_0=10^{-3}$ and $\beta=0.7$ unless otherwise specified.

\subsection{Validation of Lemma \ref{lemma:bounded_lagrange_multipliers}}
\begin{figure}[!t]
  \centering
  \includegraphics[scale=0.8]{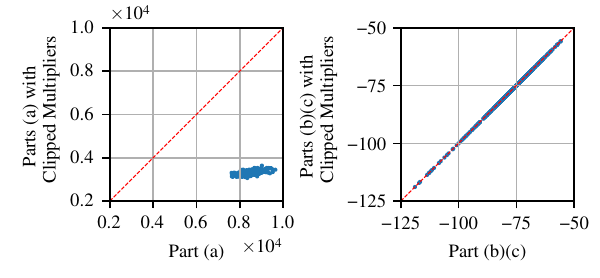}
  \vspace{-0.2cm}
  \caption{Dual function part (a), (b) and (c) values  when clipping the random Lagrange multipliers within $[0, 1]$, validating Lemma \ref{lemma:bounded_lagrange_multipliers}.}\label{fig:plot_test_lemma_1}
  \vspace{-0.3cm}
\end{figure}

We first validate Lemma \ref{lemma:bounded_lagrange_multipliers} by randomly generating the Lagrange multipliers $\lambda$ for all links $(i,j)\in\mathcal{L}$ using a log-normal distribution with its underlying normal distribution equipped with a mean $0$ and a variance $1$. Then, we clip the Lagrange multipliers within $[0, 1]$ as $\lambda'_{i,j} = \min\{1, \lambda_{i,j}\}$, $\forall (i,j)\in\mathcal{L}$, and compute the dual function values $g(\lambda)$ and $g(\lambda')$ by solving \eqref{eq:prob:constrainted_routing_and_matching:relaxed} with $\lambda$ and $\lambda'$ respectively. The value of the part (a) and part (c) in \eqref{eq:prob:constrainted_routing_and_matching:lagrangian_dual} are plotted for both $\lambda$ and $\lambda'$ in Fig. \ref{fig:plot_test_lemma_1}, where we randomly generate $100$ samples of the constellation. 
The results show that the weighted sum of flow rates in part (c) remains the same for $\lambda$ and $\lambda'$, while the MWM value in part (a) with $\lambda'$ is always less than or equal to that with $\lambda$. This validates Lemma \ref{lemma:bounded_lagrange_multipliers} that clipping the Lagrange multipliers within $[0,1]$ will not decrease the dual function value. It further implies that the optimal Lagrange multipliers are upper bounded by $1$.

\subsection{Performance of the Proposed DeepLaDu Method}
\begin{figure}[!t]
  \centering
  \includegraphics[scale=0.8]{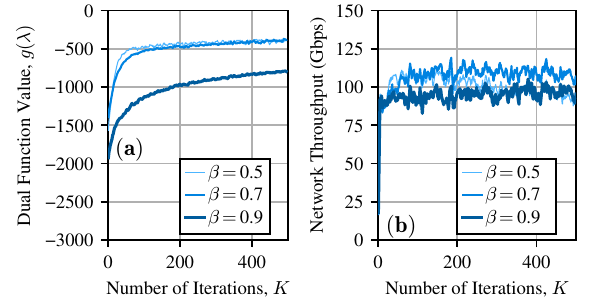}
  \vspace{-0.2cm}
  \caption{The convergence of DeepLaDu with different configuration of $\beta$}\label{fig:plot_ld_starlink_1000_varying_beta}
  \vspace{-0.25cm}
\end{figure}

\begin{figure}[!t]
  \centering
  \includegraphics[scale=0.8]{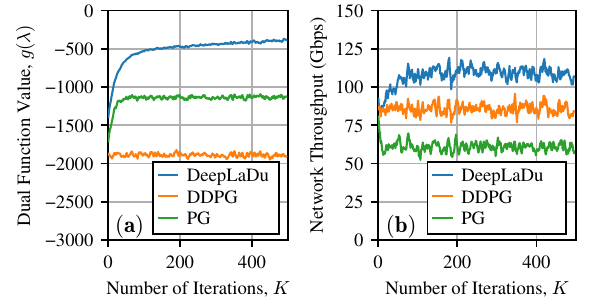}
  \vspace{-0.2cm}
  \caption{Comparison in different learning methods, DeepLaDu, PG, and DDPG.}\label{fig:plot_starlink_1000_compare_learning}
  \vspace{-0.3cm}
\end{figure}

We then evaluate the performance of our proposed DeepLaDu method in Algorithm \ref{alg:gnn_training}. 
Fig. \ref{fig:plot_ld_starlink_1000_varying_beta} shows the convergence of our method with different decaying rates $\beta$ of the learning rate. The results show that our method converges to a stationary point within $400$ iterations, and a smaller $\beta$ leads to a faster convergence rate in the dual function $g(\lambda)$, which is consistent with Theorem \ref{theorem:convergence:gnn_training}. However, when $\beta$ is too small, the method updates the GNN parameters too aggressively, leading to decreasing performance in terms of the network throughput.
In the remaining simulations, we set $\beta=0.7$. We compare the performance of our method with other learning methods, including the PG and DDPG methods as introduced above. The results in Fig. \ref{fig:plot_starlink_1000_compare_learning} show that our method outperforms the PG and DDPG methods in maximizing the dual function value and the network throughput.
This is because the PG and DDPG methods only receive the reward signal as the network throughput, which is an aggregated value of all satellite pairs and thus provides less information for guiding the learning of dual variables for each satellite pair. 
Meanwhile, our method optimizes the dual function directly based on satellite-pairwise subgradients, which provides direct feedback on the multipliers of each satellite pair and thus leads to better performance.

\begin{figure}[t]
  \centering
  \includegraphics[scale=0.725]{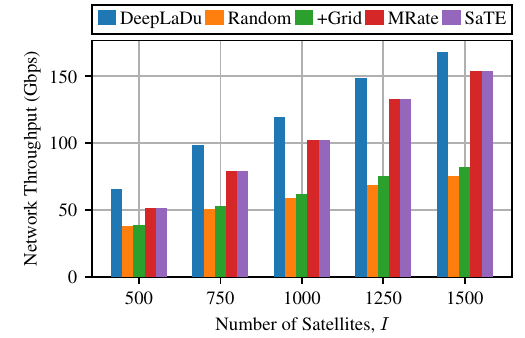}
  \vspace{-0.2cm}
  \caption{Comparison with heuristic methods.}
  \label{fig:plot_test_dual_optimization_different_constellation}
  \vspace{-0.3cm}
\end{figure}

\begin{figure}[t]
  \centering
  \includegraphics[scale=0.725]{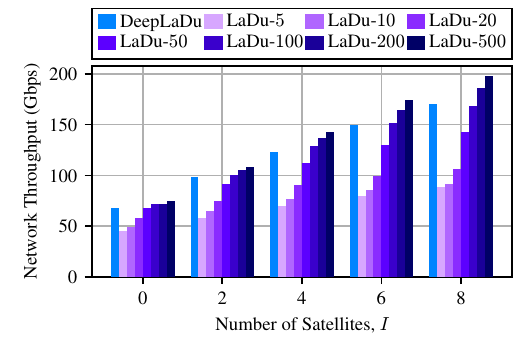}
  \vspace{-0.2cm}
  \caption{Comparison with subgradient descent method, LaDu.}
  \label{fig:plot_test_ld_starlink_1000_constellation_varying_n_sat_vs_sg}
  \vspace{-0.25cm}
\end{figure}

Using the trained GNN in our DeepLaDu method, we further evaluate the performance of our method in constellations with different sizes, i.e., different number of satellites $I$, when compared with subgradient descent method, LaDu, and heuristic methods. In Fig. \ref{fig:plot_test_dual_optimization_different_constellation}, we compare the performance of our method with heuristic/non-joint methods. The results show that our method outperforms all heuristic/non-joint methods in constellation sizes approximately $20\%\sim 100\%$ improvement in the network throughput. 
This is because our method jointly optimizes the link matching, traffic routing, and flow rates based on the constellation state, while other methods separate the link matching and traffic routing without considering their interactions and ignore the uneven distribution of traffic demand and serving rates over the globe. We also compare the performance of our method with the LaDu method in Fig. \ref{fig:plot_test_ld_starlink_1000_constellation_varying_n_sat_vs_sg}. The results show that our method achieves a comparable performance as the LaDu method with $K=100$ iterations, while our method requires only one forward pass of the GNN to predict the dual variables. This demonstrates the effectiveness of our learning-based dual optimization method in predicting the optimal dual variables without iterative optimization.

\subsection{Performance under Time-Varying Constellation}

\begin{figure}[!t]
  \centering
  \includegraphics[scale=0.8]{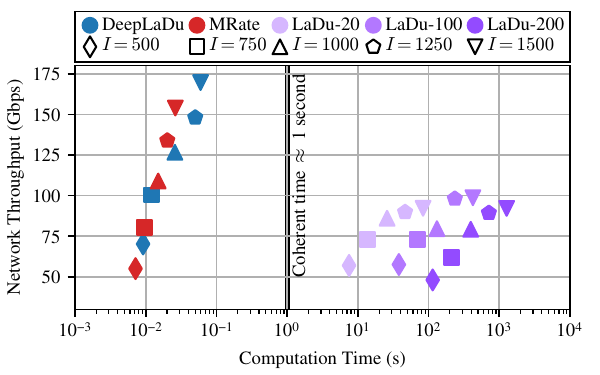}
  \vspace{-0.2cm}
  \caption{Comparison when constellation is time-varying.}
  \label{fig:plot_test_ld_starlink_1000_constellation_thr_tim_single}
  \vspace{-0.3cm}
\end{figure}
To further evaluate the performance of our method in handling time-varying constellations, we simulate the constellation dynamics and assume that the constellation structure and traffic serving and demand rates change over time while computing the dual variables, link matching, traffic routing, and flow rates.
The performance of each scheme are measured by the network throughput achieved at the solution is returned from the scheme and applied to the constellation that has evolved for the computing time of the scheme.
Fig. \ref{fig:plot_test_ld_starlink_1000_constellation_thr_tim_single} shows the network throughput and computing time of the proposed DeepLaDu method, the LaDu method with different number of iterations, and the MRate heuristic method.
The results show that the DeepLaDu outperforms the MRate with an approximately $20\%$ improvement in the network throughput while achieving a comparable computing time. This is because our GNN is trained to predict the optimal multipliers and requires only one forward pass to obtain the multipliers. On the other hand, the MRate method, even though with a short computing time, ignores traffic profiles.
When compared with the LaDu method, our DeepLaDu method improves $75\%$ in the network throughput when LaDu runs for $100$ iterations, while achieving a significantly less computing time approximately $10^{-4}$ times of LaDu-$100$.
This is because the LaDu method requires multiple iterations to optimize the dual variables from scratch for each constellation state, including solving the subproblems and computing the subgradients in each iteration, while our DeepLaDu method directly predicts the dual variables using the trained GNN.
The above results demonstrate the effectiveness of our method in handling time-varying constellations.

\subsection{Ablation Study on Constellation Configurations}
\begin{figure}[t]
  \centering
  \includegraphics[scale=0.725]{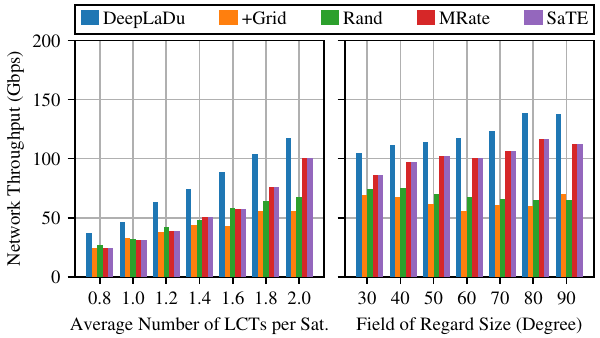}
  \vspace{-0.2cm}
  \caption{Comparison when varying LCT configurations on satellites.}
  \label{fig:plot_test_dual_optimization_vs_grid_rand_varying_availability_and_for}
  \vspace{-0.3cm}
\end{figure}

\begin{figure}[t]
  \centering
  \includegraphics[scale=0.725]{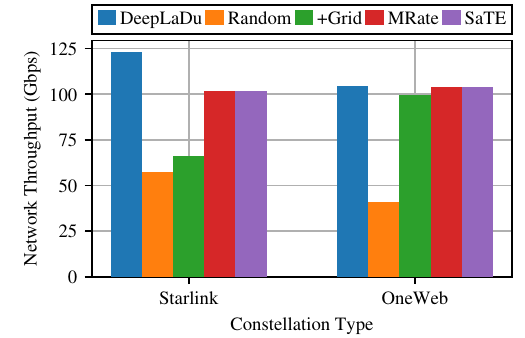}
  \vspace{-0.2cm}
  \caption{Comparison for different constellation types.}
  \label{fig:plot_test_ld_starlink_1000_constellation_varying_constellation}
  \vspace{-0.3cm}
\end{figure}

We vary LCT configurations on each satellite, of which network throughput is shown in Fig. \ref{fig:plot_test_dual_optimization_vs_grid_rand_varying_availability_and_for}.
The results show that the network throughput increases with the average number of LCTs on each satellite. This is expected since more LCTs allow more LISLs to be established.
Our method consistently outperforms the baseline methods. The maximum link rate matching method achieves the relatively higher network throughput with more LCTs. This is because high rate LISLs are more likely to be established when there are more LCTs.
Similarly, the network throughput increases when the FOR increases due to more options in the FOR. The performance of grid-based scheme is not varying much with the changing FOR, since only LCTs in the center of the steering range will be connected.
Also, we vary the pointing jitter and beam angular spreading, which influence the LISL capacity. The results in Fig. \ref{fig:plot_test_ld_starlink_1000_varying_jitter} show that the network throughput decreases when the jitter decreases. This is because a larger jitter leads to a lower LISL capacity due to the misalignment loss, as modeled in Section \ref{sec:system_model}. Moreover, we show that for a given jitter, there is an optimal beam angular spreading that maximizes the network throughput. This is because a small beam angular spreading leads to a pointing misalignment loss due to the jitter, while a large beam angular spreading leads to a larger beam power divergence loss. Thus, there exists a trade-off in designing the beam angular spreading to maximize the LISL capacity and the network throughput. From the simulation results, we observe that the optimal beam angular spreading is around $10$ times the pointing jitter.
Moreover, we compare the performance of our method in different constellation types, including Starlink ($I=1000$) and OneWeb ($I=650$) constellations. The results in Fig. \ref{fig:plot_test_ld_starlink_1000_constellation_varying_constellation} show that our method perform better than the baseline methods in the Starlink constellation while achieving a comparable performance in the OneWeb constellation. This is because the Starlink constellation has a denser satellite distribution and thus more options in link matching and traffic routing, which allows our method to better optimize the network throughput. On the other hand, the OneWeb constellation has a more sparse satellite distribution due to its higher orbital altitude and thus fewer options in link matching and traffic routing, which limits the performance gain of our method.

We further quantitively estimate the satellite coherent time of different constellations in Table \ref{tab:constellation_coherent_time}, which is estimated based on the average time duration that the connectable LCT pairs mostly remain connectable over that duration, e.g., here, we set the threshold ratio (TR) as $99.9\%$ and $99\%$ of the connectable LCT pairs at the starting time remain connectable over that duration.
The results show that the Starlink constellation has a shorter coherent time (roughly on the order of $1$ second) due to its lower orbital altitude and thus faster satellite movement, while the OneWeb constellation has a longer coherent time due to its higher orbital altitude and thus slower satellite movement. We also observe that the Kuiper constellation has a coherent time higher than Starlink, which is consistent with its orbital altitude.

\begin{table}[t]
\centering
\caption{Estimated Coherent Time of Different Constellations.}\label{tab:constellation_coherent_time}
\begin{tabular}{c|c|c}
\hline
Constellation & Coher. Time, TR=$99.9\%$ & Coher. Time, TR=$99\%$ \\
\hline
Starlink & 0.52 & 3.60 \\
OneWeb   & 0.60 & 7.18 \\
Kuiper   & 2.70 & 6.69\\
\hline
\end{tabular}
\end{table}

\begin{figure}[t]
  \centering
  \includegraphics[scale=0.725]{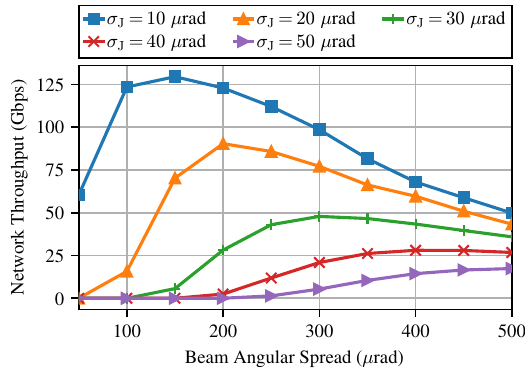}
  \vspace{-0.2cm}
  \caption{Comparison when varying the jitter and the beam angular spreading.}
  \label{fig:plot_test_ld_starlink_1000_varying_jitter}
  \vspace{-0.3cm}
\end{figure}

\section{Conclusion}\label{sec:conclusion}
This paper addressed the problem of real-time laser inter-satellite link management in large-scale, time-varying LEO mega-constellations by proposing DeepLaDu, a Lagrangian dual–guided graph learning framework that learns edge-wise congestion prices on constellation graphs. By leveraging Lagrangian dual decomposition, the proposed approach unifies LISL connection establishment, traffic routing, and rate allocation through interpretable dual variables, and replaces iterative subgradient optimization with one-step GNN inference. 
Beyond LEO mega-constellations, the proposed approach provides a scalable and interpretable approach for real-time resource coordination in non-terrestrial network backbones.
The resulting method achieves substantial throughput gains over heuristic and non-joint baselines while meeting the stringent computational constraints imposed by constellation dynamics, and is shown to scale polynomially with constellation size and to operate well within the graph coherent time. Extensive simulations on realistic Starlink-like constellations validate both the effectiveness and robustness of the approach under heterogeneous traffic demand, gateway distribution, and realistic LCT mechanics. Future work may extend this framework to incorporate uncertainty in traffic prediction and link availability, online or continual learning across evolving constellation states, and tighter integration with higher-layer network control, as well as explore alternative GNN architectures and distributed implementations suitable for onboard execution.

\section*{Appendix: LISL Capacity with Pointing Jitter}
For simplicity, we also omit the time index in this appendix.
The pointing jitter $\nu$ is modeled as a Rayleigh distribution \cite{kaymak2018survey}, with the probability density function (PDF) given by
\begin{equation}\label{eq:pointing_jitter}
  \begin{aligned}
    f_{\text{jitter}}(\nu) = \frac{\nu}{\sigma^2_{\mathrm{J}}} \exp\left(-\frac{\nu^2}{2\sigma^2_{\mathrm{J}}}\right), \nu \geq 0.
  \end{aligned}
  \notag
\end{equation}
Given the Gaussian beam model \cite{saleh2019fundamentalsa}, the beam intensity at a cross-section located a distance $z$ from the waist and at radial distance $y$ from the beam center is
\begin{equation}
  \Phi(y,z)=\Phi_0\left(\frac{W_0}{W(z)}\right)^2 \exp\!\Big(-\frac{2y^2}{W(z)^2}\Big),
  \notag
\end{equation}
where $\Phi_0$ denotes the on-axis intensity at the waist, $W_0$ is the waist radius, and $W(z)$ is the beam radius at distance $z$. 
The on-axis intensity and the beam radius evolve as
\begin{equation}
  \Phi_0 = \frac{2P_0}{\pi W_0^2}, \qquad
  W(z)=W_0\sqrt{1+\Big(\frac{z}{z_\mathrm{R}}\Big)^2},
  \notag
\end{equation}
with $z_\mathrm{R}$ the Rayleigh range.
Optical signals are assumed to use an intensity-modulation and direct-detection scheme, e.g., ON-OFF keying (OOK). While exact capacity expressions for such optical intensity channels are generally intractable, useful lower bounds exist \cite{farid2007outage,lapidoth2009capacity,chaaban2022capacity}. For a given transverse offset $y$ and link distance $z$, a capacity estimation (bits/s) is given by
\begin{equation}
  C(y,z) \approx \frac{B}{2}\log_2\!\Bigg(1+\frac{\big(A\,\Phi(y,z)\,\Psi\big)^2}{2\pi e\,\sigma^2_{\mathrm{N}}}\Bigg),
  \notag
\end{equation}
where $A$ is the receiver aperture area (m$^2$), $\Psi$ is the photodetector responsivity (A/W), $B$ is the optical bandwidth (Hz), and $\sigma_{\mathrm{N}}$ is the noise current (A) \cite{maxim_spf_transimpedance}.
Under small angular jitter, the radial distance $y$ to the beam pointing direction can be approximated by $y\approx z\,\nu$, where $\nu$ is the pointing misalignment angle. Thus, the link capacity can be expressed as a function $C(z\nu,z)$. In our analysis we assume ideal acquisition and tracking (no acquisition time or tracking errors) so that performance is governed by beam misalignment and channel noise as modeled above.

Because of pointing jitter, the LISL’s capacity fluctuates with the receiver’s offset relative to the beam center. Thus, the peak data rate for the link connecting LCTs $n$ and $m$ is governed by both the jitter statistics and the inter-terminal range \cite{farid2007outage,lapidoth2009capacity,chaaban2022capacity}, i.e.,
\begin{equation}\label{eq:rate_configuration}
  \begin{aligned}
    r_{n,m} = \max_{C'} (1-\Pr\{C(z_{i_n,i_m}\nu,z_{i_n,i_m}) < C'\}) C',
  \end{aligned}
  \notag
\end{equation}
where $1-\Pr\{C(z\nu,z) < C'\}$ is the non-outage probability that the instantaneous capacity exceeds $C'$, and $r_{n,m}$ denotes the long-term effective LISL rate averaged over outages.
Over a coherent time interval, the inter-satellite range $z$ can be treated as constant, allowing the outage event to be expressed via pointing jitter as $\Pr\{C(z\nu,z) < C'\}=\Pr\{\nu>\hat{\nu}\}$ with $C'=C(z\hat{\nu},z)$.
Here, $\hat{\nu}$ is the jitter threshold beyond which the capacity falls below $C'$.
Setting an outage threshold value $\epsilon$, i.e., $\Pr\{\nu>\hat{\nu}\}\le\epsilon$, turns the optimization in \eqref{eq:rate_configuration} into choosing a misalignment threshold $\hat{\nu}$ such that the outage probability does not exceed $\epsilon$.
Given the Rayleigh model in \eqref{eq:pointing_jitter}, this yields $\hat{\nu}=\sigma_{\mathrm{J}}\sqrt{-2\ln(\epsilon)}$.
Substituting this $\hat{\nu}$, the LISL capacity between LCTs $n$ and $m$ is approximated as
\begin{equation}
  \begin{aligned}
    r_{n,m} \approx (1-\epsilon) C(z_{i_n,i_m}\sigma_{\mathrm{J}} \sqrt{-2\ln(\epsilon)},z_{i_n,i_m}), \ \forall \{n,m\} \in \mathcal{E}.
  \end{aligned}
  \notag
\end{equation}

\section*{Appendix: Proof of Lemma \ref{lemma:bounded_lagrange_multipliers}}
For given Lagrange multipliers $\lambda$, map them to new ones as $\lambda'_{i,j} = \min\{1, \lambda_{i,j}\}$, $\forall (i,j)\in\mathcal{L}$.
For $\lambda$, let the set of flows with routing cost greater than or equal to $1$ be $\mathcal{F}^+$ and those with routing cost less than $1$ as $\mathcal{F}^-$. Then, consider the flow costs for $\lambda'$. In the set $\mathcal{F}^-$, the routing cost is not changed since the routing cost was less than $1$ for $\lambda$ and is still the minimum cost for $\lambda'$.
In the set $\mathcal{F}^+$, assuming a flow's routing decisions are changed, and its cost becomes less than $1$, this new routing cost should be also the minimum cost for $\lambda$ since all links in the path have a cost less than $1$, which contradicts the fact that the minimum routing cost is greater than or equal to $1$ for flows in $\mathcal{F}^+$.
Therefore, the sets $\mathcal{F}^+$ and $\mathcal{F}^-$ remains the same for $\lambda'$ as for $\lambda$.
Since the flow rates of $\mathcal{F}^+$ computed in the dual function is always $0$ (due to the negative weights in the maximization objective) and the routing cost of flows in $\mathcal{F}^-$ remains the same, the value of the routing cost-weighted flow rate maximization is the same for $\lambda'$ and $\lambda$, i.e.,
\begin{equation}
  \begin{aligned}
      &\max_{
        \substack{
          \mathbf{q}\in \mathcal{Q}
        }
      }
      \big\{
      \sum_{(s,s')\in\mathcal{F} } q^{s,s'} (1 - \min_{
        \substack{
          \mathbf{x}^{s,s'}\in\mathcal{X}^{s,s'}
        }
      }
      \sum_{(i,j)\in\mathcal{L}}\lambda'_{i,j} x^{s,s'}_{i,j})   
      \big\}\\
      =
      &\max_{
        \substack{
          \mathbf{q}\in \mathcal{Q}
        }
      }
      \big\{
      \sum_{(s,s')\in\mathcal{F} } q^{s,s'} (1 - \min_{
        \substack{
          \mathbf{x}^{s,s'}\in\mathcal{X}^{s,s'}
        }
      }
      \sum_{(i,j)\in\mathcal{L}}\lambda_{i,j} x^{s,s'}_{i,j})   
      \big\}.
\end{aligned}
\notag
\end{equation}
Also, we have the MWM with $\lambda'$ leading to matched weights no greater than $\lambda$ since $\lambda'\leq\lambda$ as
\begin{equation}
  \begin{aligned}
&\max_{
  \substack{
    \mathbf{c}\in\mathcal{C}
  }
}
\big\{
\sum_{\{n,m\}\in\mathcal{E}} (\lambda'_{i_n,i_m} + \lambda'_{i_m,i_n})  r_{n,m} c_{n,m}
\big\}\\ 
\leq&
\max_{
  \substack{
    \mathbf{c}\in\mathcal{C}
  }
}
\big\{
\sum_{\{n,m\}\in\mathcal{E}} (\lambda_{i_n,i_m} + \lambda_{i_m,i_n})  r_{n,m} c_{n,m}
\big\}.
  \end{aligned}
  \notag
\end{equation}
Due to the above facts, $g(\lambda')\geq g(\lambda)$. Since we can always find such $\lambda'$ for any $\lambda$ as well as for any $\lambda^*$, there exist optimal Lagrange multipliers $\lambda^*$ with all elements less than or equal to $1$. Thus, the difference between $\lambda^{[1]}=\mathbf{0}$ and $\lambda^*$ can be bounded by $\|\mathbf{1}^{|\mathcal{E}|\times1}\|$. This completes the proof.

\section*{Proof of Theorem \ref{theorem:convergence:gnn_training}}

\begin{proof}
For convenience, we denote 
\begin{equation}
  \begin{aligned}
    \zeta(\mathbf{w}^{[k]}) \triangleq - \sum_{(i,j)} \delta(\lambda)_{i,j} \nabla_{\mathbf{w}^{[k]}}\lambda_{i,j}|_{\lambda=\mu(\mathbf{S}^{[k]}, \mathbf{R}^{[k]} | \mathbf{w}^{[k]})}.
  \end{aligned}
\end{equation}
The smoothness of $L(\mathbf{w})$ implies that $\forall \mathbf{w}', \mathbf{w}$,
\begin{equation}
  \begin{aligned}
    L(\mathbf{w}') \leq L(\mathbf{w}) + \nabla_\mathbf{w} L(\mathbf{w})^{\rm T} (\mathbf{w}' - \mathbf{w}) + \frac{\gamma}{2}\|\mathbf{w}' - \mathbf{w}\|^2.
  \end{aligned}
  \notag
\end{equation}
Substituting $\mathbf{w}^{[k+1]}$, $\mathbf{w}^{[k]}$ and \eqref{eq:gnn_parameter_update}, i.e., $\mathbf{w}^{[k+1]} = \mathbf{w}^{[k]} + \alpha^{[k]} \zeta(\mathbf{w}^{[k]})$, in the smoothness expression, we have
\begin{equation}
  \begin{aligned}
&\quad L(\mathbf{w}^{[k+1]}) - L(\mathbf{w}^{[k]}) \\
&\leq -\alpha^{[k]}  \nabla_{\mathbf{w}^{[k]}} L(\mathbf{w}^{[k]})^{\rm T} ( \zeta(\mathbf{w}^{[k]}) ) + \frac{\gamma}{2} (\alpha^{[k]})^2 \|\zeta(\mathbf{w}^{[k]})\|^2.
  \end{aligned}
  \notag
\end{equation}
Taking the expectation on both sides, we have
\begin{equation}
  \begin{aligned}
    \!&\quad\ \expt[L(\mathbf{w}^{[k+1]})] - \expt[L(\mathbf{w}^{[k]})]\\
    \!&\leq\! -\alpha^{[k]}  \expt[\nabla_{\mathbf{w}^{[k]}} L(\mathbf{w}^{[k]})^{\rm T}\zeta(\mathbf{w}^{[k]})] + \frac{\gamma}{2} (\alpha^{[k]})^2 \expt[\|\zeta(\mathbf{w}^{[k]})\|^2]\\
    \!&\leq\! -\alpha^{[k]}\!\|\nabla_{\mathbf{w}^{[k]}} L(\mathbf{w}^{[k]})\|^2\!+\!\frac{\gamma}{2} (\alpha^{[k]})^2 (\|\nabla_{\mathbf{w}^{[k]}} L(\mathbf{w}^{[k]})\|^2\! +\!\sigma_{L}^2).
  \end{aligned}
  \notag
\end{equation}
Since $\alpha^{[k]}$ is decaying, we can always find a $K'$ such that $\alpha^{[k]}\leq \frac{1}{\gamma}$ for all $k\geq K'$. Considering $k>K'$, apply telescoping sum on both sides from $K'$ to $k$, we have
\begin{equation}
  \begin{aligned}
    &\quad\ \sum_{i=K'}^{k} (\alpha^{[i]} -\frac{\gamma}{2} (\alpha^{[i]})^2) \|\nabla_{\mathbf{w}^{[k]}} L(\mathbf{w}^{[k]})\|^2 \\
    &\leq \expt[L(\mathbf{w}^{[K']})] - \expt[L(\mathbf{w}^{[k+1]})] + \frac{\gamma\sigma_{L}^2}{2} \sum_{i=K'}^{k} (\alpha^{[i]})^2.
  \end{aligned}
  \notag
\end{equation}
As $\alpha^{[i]}\leq \frac{1}{\gamma}$, $\alpha^{[i]} -\frac{\gamma}{2} (\alpha^{[i]})^2\geq\alpha^{[i]} -\frac{\gamma}{2}\cdot \frac{1}{\gamma} \cdot \alpha^{[i]} = \frac{1}{2}\alpha^{[i]}$, for all $i\geq K'$.
Moreover, note that the dual function $g(\mu(\mathbf{S}, \mathbf{R} | \mathbf{w}))$ is bounded because the GNN function $\mu(\cdot | \mathbf{w})$ is bounded between $0$ and $1$, and the traffic rates as well as link capacities are also bounded. Consequently, the loss function $L(\mathbf{w})$ is also bounded.
Let $\hat{L}$ and $\check{L}$ be the upper and the lower bound of the loss function $L(\mathbf{w})$, i.e., $\check{L} = L(\mathbf{w})$ and $\hat{L}= L(\mathbf{w})$ $\forall \mathbf{w}$. Then, their difference is bounded and finite as $\Delta_L = \hat{L} - \check{L}$.
Therefore, we have
\begin{equation}
  \begin{aligned}
  \frac{1}{2}\!\sum_{i=K'}^{k} \alpha^{[i]} \|\nabla_{\mathbf{w}^{[k]}} L(\mathbf{w}^{[k]})\|^2\leq \hat{L} - \check{L}+ \frac{\gamma\sigma_{L}^2}{2} \sum_{i=K'}^{k} (\alpha^{[i]})^2.
  \end{aligned}
  \notag
\end{equation}
Furthermore, we find the minimum of $\|\bar{\zeta}(\mathbf{w}^{[i]})\|^2$ over $i=K',\dots,k$ on the left-hand side, i.e.,
\begin{equation}
  \begin{aligned}
        &\frac{1}{2} \lim_{k\to\infty} \min \{\|\nabla_{\mathbf{w}^{[i]}} L(\mathbf{w}^{[i]})\|^2\}_{i=K'}^{k}\!\!\sum_{K'}^{k} \alpha^{[i]} \\
    \leq& \frac{1}{2}\!\! \sum_{i=K'}^{k}\! \alpha^{[i]} \|\nabla_{\mathbf{w}^{[i]}} L(\mathbf{w}^{[i]})\|^2\!\leq\! \hat{L}\!-\!\check{L}\!+\!\frac{\gamma\sigma_L^2}{2}\!\sum_{i=K'}^{k} (\alpha^{[i]})^2.
  \end{aligned}
  \notag
\end{equation}
Rearrange the inequality, we have
\begin{equation}
  \begin{aligned}
    \lim_{k\to\infty}\!\!\min \{\|\nabla_{\mathbf{w}^{[i]}} L(\mathbf{w}^{[i]})\|^2\}_{i=K'}^{k}\!\!\leq\!\!\frac{2\Delta_L \!+\! \gamma\sigma_L^2\!\sum_{i=K'}^{k} (\alpha^{[i]})^2}{\sum_{i=K'}^{k} \alpha^{[i]}}.
  \end{aligned}
  \notag
\end{equation}
As $\Delta_L$, $\gamma$ and $\sigma_L$ are all finite, the convergence of the p-series, $\sum_{i=K'}^{k}\alpha^{[i]}$ and $\sum_{i=K'}^{k}(\alpha^{[i]})^2$ as $k\to\infty$ when $0.5\leq \beta < 1$ \cite{boyd2003subgradient}, implies $\lim_{k\to\infty}\min \{\|\nabla_{\mathbf{w}^{[i]}} L(\mathbf{w}^{[i]})\|^2\}_{i=K'}^{k} \leq \mathcal{O}(k^{-(1-\beta)})$. This also implies that $\lim_{k\to\infty}\min \{\|\nabla_{\mathbf{w}^{[i]}} L(\mathbf{w}^{[i]})\|\}_{i=1}^{k} \leq \mathcal{O}(k^{-(1-\beta)})$, which completes the proof.
\end{proof}

\bibliography{main}
\bibliographystyle{IEEEtran}

\end{document}